\title{\LARGE \bf Spreading Processes over Socio-Technical Networks\\with Phase-Type Transmissions}
\author{Masaki Ogura and Victor M.~Preciado
\thanks{The authors are with the Department of Electrical and Systems
Engineering, University of Pennsylvania, Philadelphia, PA 19014, USA.
Email:  {\tt\small \{ogura,preciado\}@seas.upenn.edu}}%
\thanks{{This work was supported in part by the NSF under grants CNS-1302222
and IIS-1447470.}}%
}
\newtheorem{definition}{Definition}[section]
\newtheorem{assumption}[definition]{Assumption}
\newtheorem{lemma}[definition]{Lemma}
\newtheorem{proposition}[definition]{Proposition}
\newtheorem{theorem}[definition]{Theorem}
\newtheorem{remark}[definition]{Remark}
\newtheorem{example}[definition]{Example}
\newcommand{\smallbmatrix}[1]{{\begin{bsmallmatrix}#1\end{bsmallmatrix}}}
\DeclareSymbolFont{bbold}{U}{bbold}{m}{n}
\DeclareSymbolFontAlphabet{\mathbbold}{bbold}
\newcommand{\onev}{\mathbbold{1}}
\newcommand{\onevtop}{\onev^{\!\!\top}\!}
\DeclareMathAlphabet{\pazocal}{OMS}{zplm}{m}{n}
\renewcommand{\mathcal}[1]{\pazocal{#1}}
\def\usetodonotes{}	
\def\disabletodonotes{}
\def\disabletodonotes{disable}
\definecolor{darkgreen}{rgb}{0.0, 0.2, 0.13}
\definecolor{darkspringgreen}{rgb}{0.09, 0.45, 0.27}
\definecolor{darkcandyapplered}{rgb}{0.64, 0.0, 0.0}
\def\usetodonotes{}
\newcommand{\afterequation}{\vskip 3pt}
\begin{document}

\maketitle
\thispagestyle{empty}
\pagestyle{empty}

\begin{abstract}
Most theoretical tools available for the analysis of spreading processes over networks assume exponentially distributed transmission and recovery times. In practice, the empirical distribution of transmission times for many real spreading processes, such as the spread of web content through the Internet, are far from exponential. To bridge this gap between theory and practice, we propose a methodology to model and analyze spreading processes with arbitrary transmission times using phase-type distributions. Phase-type distributions are a family of distributions that is dense in the set of positive-valued distributions and can be used to approximate any given distributions. To illustrate our methodology, we focus on a popular model of spreading over networks: the susceptible-infected-susceptible (SIS) networked model. In the standard version of this model, individuals informed about a piece of information transmit this piece to its neighbors at an exponential rate. In this paper, we extend this model to the case of transmission rates following a phase-type distribution. Using this extended model, we analyze the dynamics of the spread based on a vectorial representations of phase-type distributions. We illustrate our results by analyzing spreading processes over networks with transmission and recovery rates following a Weibull distribution.
\end{abstract}

\section{Introduction} Understanding and controlling spreading processes over
complex networks is an important problem with applications in many relevant
fields, including public health~\cite{Pastor-Satorras2015}, malware
spreading~\cite{Garetto2003}, and information propagation over socio-technical
networks~\cite{Lerman2010}. One fundamental result on the analysis of spreading
processes over networks is the close connection between the spectral radius of
the network and the dynamics of the
spread~\cite{Lajmanovich1976,Chakrabarti2008,VanMieghem2009a,Sahneh2012a}. Based
on this result, the authors
in~\cite{Preciado2013,Preciado2013a,Preciado2013b,Preciado2014} proposed an
optimization framework to find the optimal allocation of resources to control
epidemic outbreak in different scenarios.

The vast majority of spreading models over networks assume exponentially
distributed transmission and recovery rates. In contrast, empirical observations
indicate that most real-world spreading processes do not satisfy this
assumption~\cite{Limpert2001,Lloyd2001a,Lloyd2001}. For example, the transmission rates of
human immunodeficiency viruses present a distribution far from
exponential~\cite{Blythe1988}. In the context of socio-technical networks, the
inter-arrival time of Twitter messages or the propagation time of news stories
on a social media site can be explained using lognormal
distributions~\cite{Lerman2010,Limpert2001,Mieghem2011a,Doerr2013}.

There are only a few results available for analyzing spreading processes over networks with non-exponential transmission (and/or recovery) rates. The experimental study in~\cite{VanMieghem2013} confirmed the drastic effect that non-exponential rates can have on the speed of spreading. In~\cite{Jo2014}, an analytically solvable (although rather simplistic) model of spreading with non-exponential rates was proposed. An approximate analysis of spreading processes over graphs with general transmission and recovery times was proposed in~\cite{Cator2013a} using asymptotic approximations.

In this paper, we  propose an alternative approach to analyze general
transmission and recovery rates using phase\nobreakdash-type
distributions~\cite{Asmussen1996}. In contrast with \cite{Cator2013a}, we
provide an analysis of general infection and recovery times for phase-type
distributions without relying on asymptotic approximations. The class of
phase-type distributions is dense in the space of positive-valued
distributions~\cite{Cox1955}, hence, we can theoretically analyze arbitrary
transmission and recovery rates. Furthermore, there are efficient algorithms to
compute the parameters of a phase-type distributions to approximate any given
distribution \cite{Asmussen1996}. To validate our approach, we verify that the
approximations in \cite{Cator2013a} are valid under certain irreducibility
assumptions. The key tool used in our derivations is a vectorial representations
proposed in~\cite{Brockett2008a}, which we use to represent phase-type
distributions.

The paper is organized as follows. In Section~\ref{sec:phase}, we state the spreading model under consideration. In Section~\ref{sec:inf=exp}, we analyze this model when the recovery times follow a phase-type distribution, while the transmission times follow an exponential distribution. Section \ref{sec:cure=exp} is devoted to the dual case when the transmission times follow a phase-type distribution, while the recovery times follow an exponential distribution. Numerical simulations are presented in Section~\ref{sec:num}.

\subsection{Mathematical Preliminaries}

An undirected graph is a pair~$\mathcal G = (\mathcal V, \mathcal E)$, where $\mathcal V = \{1, \dotsc, n\}$ is the set of nodes, and $\mathcal E$ is the set of edges, consisting of distinct and unordered pairs~$\{i, j\}$ for $i, j\in \mathcal V$. We say that $i$ is a neighbor of~$j$ (or that $i$ and $j$ are adjacent) if $\{i, j\} \in \mathcal E$. The adjacency matrix~$A\in \mathbb{R}^{n\times n}$ of $\mathcal G$ is defined as the $\{0, 1\}$\nobreakdash-matrix whose $(i,j)$ entry is one if $i$ and $j$ are adjacent, $0$ otherwise. Finally, the expectation of a random variable is denoted by~$E[\cdot]$.

We let $I$ and $O$ denote the identity and zero matrices with appropriate dimensions. Let $e_i^p$ denote the $i$th standard unit vector in $\mathbb{R}^p$ and define $E_{ij}^{p} = e_i^{p}(e_j^{p})^\top$. By $\onev_p$ we denote the $p$-vector whose entries are all one. We omit the dimension~$p$ when it is obvious from the context. A real matrix $A$, or a vector as its special case, is said to be nonnegative (positive), denoted by $A\geq 0$ ($A>0$, respectively), if $A$ is nonnegative (positive, respectively) entry-wise. The notations $A\leq 0$ and $A<0$ are understood in the obvious manner. We denote the Kronecker product of matrices~$A$ and $B$ by $A\otimes B$. Let $A$ be a square matrix. The maximum real part of the eigenvalues of $A$ is denoted by~$\eta(A)$. We say that $A$ is Hurwitz stable if $\eta(A)< 0$. Also, we say that $A$ is Metzler if the off-diagonal entries of $A$ are all non-negative. Finally, $A$ is said to be irreducible if no similarity transformation by a permutation matrix makes $A$ into block upper triangular matrix.

Below we state some basic lemmas about Metzler matrices. The first lemma about the Hurwitz stability of Metzler matrices is standard and thus its proof is omitted.

\begin{lemma}[\cite{Farina2000}] \label{lem:Metzler:Hurwitz}
For a Metzler matrix $A$, the following conditions are equivalent. 
\begin{enumerate}
\item $A$ is Hurwitz stable. 

\item There exists a positive vector $v$ such that $Av < 0$. 

\item $A$ is nonsingular and $A^{-1} \leq 0$. 
\end{enumerate}
\afterequation
\end{lemma}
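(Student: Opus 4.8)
The plan is to prove the cycle of implications $(1)\Rightarrow(3)\Rightarrow(2)\Rightarrow(1)$, using throughout the standard device of shifting $A$ into a nonnegative matrix. Since $A$ is Metzler, we may pick a scalar $s>0$ large enough that $N := A + sI$ is entrywise nonnegative. The eigenvalues of $A$ are exactly those of $N$ shifted by $-s$, and the Perron--Frobenius theorem (in the form valid for \emph{arbitrary} nonnegative matrices) guarantees that $\rho(N)$, the spectral radius of $N$, is itself an eigenvalue of $N$; hence $\eta(A) = \rho(N) - s$, and in particular $A$ is Hurwitz stable if and only if $\rho(N)<s$. We will also use that $\rho(N)$ is an eigenvalue of $N^\top$ with an associated nonnegative (not necessarily positive) eigenvector.

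For $(1)\Rightarrow(3)$: if $A$ is Hurwitz then $0$ is not an eigenvalue of $A$, so $A$ is nonsingular, and $\rho(N)<s$ makes the Neumann series $(sI-N)^{-1} = s^{-1}\sum_{k\geq 0}(N/s)^k$ converge; every summand is nonnegative, so $-A^{-1} = (sI-N)^{-1}\geq 0$, i.e. $A^{-1}\leq 0$. For $(3)\Rightarrow(2)$: put $v := -A^{-1}\onev$. Since $-A^{-1}\geq 0$ is nonsingular, none of its rows vanishes, and multiplying a matrix with no zero row by the strictly positive vector $\onev$ yields a strictly positive vector; hence $v>0$, while $Av = -\onev < 0$.

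For $(2)\Rightarrow(1)$: let $u\geq 0$ with $u\neq 0$ satisfy $u^\top N = \rho(N)\,u^\top$. From $Av<0$ we get $Nv < sv$, so $sv-Nv$ is strictly positive; pairing it with the nonzero nonnegative vector $u$ gives $0 < u^\top(sv - Nv) = (s - \rho(N))\,u^\top v$, and since $u\geq 0$, $u\neq 0$, and $v>0$ force $u^\top v>0$, we conclude $\rho(N)<s$, i.e. $\eta(A)<0$. The steps $(1)\Rightarrow(3)$ and $(3)\Rightarrow(2)$ are essentially bookkeeping once the shift $N=A+sI$ is in place, so I expect $(2)\Rightarrow(1)$ to be the main obstacle: it needs the Perron--Frobenius left-eigenvector trick, and — because $A$ is not assumed irreducible, so $u$ is only nonnegative rather than positive — one must keep the strict versus non-strict inequalities carefully aligned (a strictly positive $sv-Nv$ against a merely nonzero nonnegative $u$) to secure $u^\top v>0$ and $u^\top(sv-Nv)>0$.
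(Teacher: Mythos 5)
Your proof is correct and complete. The paper itself gives no proof of this lemma --- it is stated as standard and cited to Farina--Rinaldi --- so there is nothing to compare against; your argument via the shift $N=A+sI$, the Neumann series for $(1)\Rightarrow(3)$, the row-sum observation for $(3)\Rightarrow(2)$, and the left Perron eigenvector pairing for $(2)\Rightarrow(1)$ is exactly the standard route, and you have handled the one delicate point (that $u$ is only nonnegative in the reducible case, so the strict positivity must come from $sv-Nv>0$ and $v>0$) correctly.
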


The next lemma characterizes the marginal case when $\eta(A) = 0$.

\begin{lemma}\label{lem:metzler}
Let $A$ be an irreducible Metzler matrix. Then $\eta(A) = 0$ if and only if $A$ has a positive null vector.
\end{lemma}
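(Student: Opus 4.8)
The plan is to reduce the statement to the classical Perron--Frobenius theorem by a spectral shift. Choose a scalar $s > 0$ large enough that $B := A + sI \geq 0$, which is possible precisely because $A$ is Metzler. Since $A$ is irreducible, so is $B$, and therefore $B$ is an irreducible nonnegative matrix, to which Perron--Frobenius applies: its spectral radius $\rho(B)$ is a (simple) eigenvalue, it admits an entrywise-positive eigenvector, and it is the \emph{only} eigenvalue of $B$ possessing a nonnegative eigenvector.

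The first key step is to record the identity $\eta(A) = \rho(B) - s$. The eigenvalues of $A$ are exactly those of $B$ translated by $-s$, so $\eta(A) = \eta(B) - s$; and for the nonnegative matrix $B$ every eigenvalue $\lambda$ satisfies $\mathrm{Re}\,\lambda \le |\lambda| \le \rho(B)$ while $\rho(B)$ itself is an eigenvalue, hence $\eta(B) = \rho(B)$. Consequently $\eta(A) = 0$ if and only if $\rho(B) = s$.

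For the ``if'' direction, suppose $v > 0$ satisfies $Av = 0$. Then $Bv = sv$, so $s$ is an eigenvalue of $B$ with a positive eigenvector; by the uniqueness clause of Perron--Frobenius this forces $s = \rho(B)$, hence $\eta(A) = 0$. For the ``only if'' direction, suppose $\eta(A) = 0$, i.e.\ $\rho(B) = s$. Perron--Frobenius furnishes a positive eigenvector $v > 0$ with $Bv = \rho(B)v = sv$, which rearranges to $Av = 0$; thus $v$ is a positive null vector of $A$.

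The only place that needs genuine care is the bookkeeping behind $\eta(A) = \rho(B) - s$ and invoking the correct half of Perron--Frobenius in each direction (existence of a positive Perron eigenvector for ``only if'', uniqueness of the eigenvalue admitting a positive eigenvector for ``if''); irreducibility of $A$ is exactly what licenses both. An alternative, shift-free route would combine Lemma~\ref{lem:Metzler:Hurwitz} with a limiting argument applied to $A - \epsilon I$ as $\epsilon \downarrow 0$, but the Perron--Frobenius argument above is cleaner.
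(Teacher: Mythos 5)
Your proof is correct and follows essentially the same route as the paper: shift $A$ to a nonnegative irreducible matrix $B = A + sI$, note $\eta(B) = \rho(B)$, and apply the Perron--Frobenius theorem in both directions. You spell out the converse direction (via the uniqueness of the eigenvalue admitting a positive eigenvector) which the paper leaves as "similar," but the underlying argument is identical.
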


\begin{proof}
In the proof we denote the spectral radius of a matrix by $\rho(\cdot)$. Notice
that, since $A$ is Metzler, there exists $c\geq 0$ such that $B = A+cI$ is
positive and $\eta(B) = \rho(B)$. First assume that $\eta(A) = 0$. Then
the positive and irreducible matrix $B$ satisfies $\rho(B) =  c$. Therefore, by
the Perron-Frobenius theorem (see, e.g., \cite{Vandergraft1968}), there exists a
positive vector $v$ such that $Bv = cv$, which implies $Av = 0$. Hence $v$ is a
positive null vector of $A$. The other direction can be proved in a similar way
and hence it is omitted.
\end{proof}

Finally we state the following lemma. 

\begin{lemma}\label{lem:eta<...}
Let $A, B\in \mathbb{R}^{n\times n}$. If $A$ is irreducible and Metzler and also $B \neq 0$ is non-negative, then $\eta(A) < \eta(A+B)$.
\end{lemma}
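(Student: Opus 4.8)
The plan is to use the two preceding lemmas to turn the strict inequality into a short positivity argument about Perron eigenvectors. Write $\mu = \eta(A)$ and $\nu = \eta(A+B)$; the goal is $\nu > \mu$. First I would note that $A+B$ inherits the structure we need: it is Metzler because adding the non-negative matrix $B$ keeps all off-diagonal entries non-negative, and it is irreducible because the zero/nonzero pattern of $A+B$ dominates that of $A$, so any directed-graph connection witnessing the irreducibility of $A$ is still present in $A+B$.

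Next I would extract the relevant eigenvectors via Lemma~\ref{lem:metzler}. The matrix $A^\top - \mu I$ is irreducible and Metzler (transposition preserves both properties and the spectrum), and $\eta(A^\top - \mu I) = 0$, so it has a positive null vector; that is, there exists $w > 0$ with $w^\top A = \mu\, w^\top$, a positive left Perron eigenvector of $A$. Likewise, $A + B - \nu I$ is irreducible and Metzler with $\eta = 0$, so Lemma~\ref{lem:metzler} yields $u > 0$ with $(A+B)u = \nu u$, a positive right Perron eigenvector of $A+B$.

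The last step is a one-line computation: evaluating $w^\top (A+B) u$ in two ways gives $\nu\, w^\top u = \mu\, w^\top u + w^\top B u$, hence $(\nu - \mu)\, w^\top u = w^\top B u$. Since $w > 0$, $u > 0$, and $B$ is non-negative and nonzero, the right-hand side is strictly positive, while $w^\top u > 0$; dividing, $\nu - \mu > 0$, which is the claim. (An alternative route argues by contradiction: if $\eta(A+B) \le \mu$ then $A+B-\mu I$ is Hurwitz or has $\eta = 0$, and in either case Lemma~\ref{lem:Metzler:Hurwitz} or Lemma~\ref{lem:metzler} applied to a null vector forces $w^\top B v \le 0$ for the positive vectors involved, contradicting $B \gneq 0$.)

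I expect the only genuinely delicate point to be the assertion that $w^\top B u > 0$ rather than merely $\ge 0$: this is precisely where the hypothesis $B \neq 0$ is used, and it relies on the \emph{strict} positivity of both $w$ and $u$. That strict positivity is the whole reason irreducibility is assumed and why Lemma~\ref{lem:metzler} (which yields a positive, not merely non-negative, null vector) is the right tool. For a reducible $A$ the Perron eigenvector could be supported away from the rows and columns on which $B$ is nonzero, and the strict inequality would genuinely fail; everything else in the argument — the Metzler/irreducibility bookkeeping for $A+B$ and the two evaluations of $w^\top(A+B)u$ — is routine.
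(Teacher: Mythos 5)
Your proof is correct, but it takes a genuinely different route from the paper's. The paper reduces to the nonnegative case by the shift $A \mapsto A + cI$ and then simply cites the strict monotonicity of the spectral radius for irreducible nonnegative matrices (Theorem~4.6 of Vandergraft), omitting the details. You instead give a self-contained left/right eigenvector pairing argument: Lemma~\ref{lem:metzler} applied to $A^\top - \eta(A) I$ and to $A + B - \eta(A+B) I$ produces strictly positive vectors $w$ and $u$ with $w^\top A = \eta(A)\, w^\top$ and $(A+B)u = \eta(A+B)\, u$, and evaluating $w^\top (A+B) u$ two ways yields $\bigl(\eta(A+B) - \eta(A)\bigr) w^\top u = w^\top B u > 0$. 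All the supporting steps check out: $A+B$ is Metzler and irreducible since $B \geq 0$ only enlarges the off-diagonal support; transposition and diagonal shifts preserve irreducibility and the Metzler property; and the strict positivity of $w^\top B u$ is exactly where $B \neq 0$ and the positivity of both eigenvectors (hence irreducibility) are used, as you note. The trade-off is that the paper's version is shorter but leans on an external theorem and leaves the shift argument implicit, whereas yours uses only the two lemmas already stated in the paper and makes the role of each hypothesis explicit; either is acceptable, and yours would arguably be the more complete proof to include.
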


\begin{proof}
The inequality holds true if $A$ is non-negative by \cite[Theorem~4.6]{Vandergraft1968}. If $A$ is not non-negative, we consider instead $A+cI$ where $c\geq 0$ is a scalar such that $A+cI$ is nonnegative and also $\eta(A+cI) = \rho(A+cI)$. The details of the proof are omitted.
\end{proof}

\section{SIS Model with Phase-type Transmission\\and Recovery
Times}\label{sec:phase}

In this paper, we investigate a popular model of spreading
over networks called the susceptible-infected-susceptible (SIS) model. We provide a definition of this model as a family of continuous-time stochastic process below:

\begin{definition}
Let $\mathcal G$ be an undirected graph with $n$ nodes. We say that a stochastic process $\{z_i(t)\}_{t\geq 0}$, ($i= 1, \dotsc, n$) taking values in the set $\{\text{susceptible}, \text{infected}\}$ is a \emph{susceptible-infected-susceptible (SIS) model over $\mathcal G$} if the process satisfies the following conditions:
\begin{enumerate}
\item[D1)] For all $t\geq 0$ and $i\in \{1, \dotsc, n\}$, there exists a random number $R^{(i)}(t) > 0$ such that, once $z_i$ becomes infected at time~$t$, it remains infected during the time interval $[t, t+R^{(i)}(t)]$ and becomes susceptible at time~$t + R^{(i)}(t)$. We call $R^{(i)}(t)$ a \emph{recovery time}.

\item[D2)] For all $t$ and $i$, there exists a renewal process $0 = T_0^{(i)}(t) < T_1^{(i)}(t) < \cdots$ such that, for every neighbor $j$ of $i$ and $k \geq 1$, if $T^{(i)}_k(t) < R^{(i)}(t)$, then $z_j$ becomes infected at time~$t+T^{(i)}_k(t)$. We call the numbers $T_k^{(i)}(t)$ ($k\geq 1$) \emph{transmission times}.

\item [D3)] There exists a subset $\Lambda \subset \{1,\dotsc, n\}$ such that $z_i(0)$ is infected if $i\in \Lambda$ and $z_i(0)$ is susceptible otherwise. We regard that these initially infected nodes become infected at time~$0$, i.e., a node $i\in \Lambda$ has the recovery time~$R^{(i)}(0)$ and the transmission times $T_k^{(i)}(0)$.
\end{enumerate}
We say that the infection-free equilibrium $p_i(t) = \Pr(\text{$i$ is infected
at time~$t$}) \equiv 0$ of the SIS model is \emph{exponentially stable} if,  as $t\to\infty$,  $p_i(t)$ ($i=1, \dotsc, n$) converges to zero exponentially fast.\footnote{For simplicity in our presentation, we often say that $i$ is infected (or susceptible) at time~$t$ when $z_i(t) = \text{infected}$ (susceptible, respectively).}
\end{definition}

Throughout the paper we assume that all the recovery times and transmission times are independent to each other and, moreover, depend neither on $t$ nor $i$. Therefore, we hereafter omit $t$ and $i$ and write $R^{(i)}(t)$ and $T_k^{(i)}(t)$ as $R$ and $T_k$ when no confusion arises.

In \cite{Cator2013a}, the authors present the following two criteria for the
stability of the infection-free equilibrium based on an asymptotic argument.
First they show that, under the assumption that the inter-renewal times of $T$
follow an exponential distribution with mean $1/\beta$, if
\begin{equation}\label{eq:Cator:PHrecovery}
\eta(A) < {1}/(\beta E[R]), 
\end{equation}
then the meta-stable state of the infection probabilities must be equal to zero.
 They then show that, under the assumption that $R$ follows an exponential
distribution with mean $1/\delta$,  if
\begin{equation}\label{eq:Cator:PHinfection}
\eta(A) < \left(1-f(\delta)\right)/{f(\delta)}, 
\end{equation} 
where $f$ is the Laplace transform of the probability density function of the
inter-renewal times of $T$, then the meta-stable state of the infection
probabilities must be equal to zero. For the precise meaning of the
meta-stability, the readers are referred to~\cite{Cator2013a}.

One of the primary purposes of this paper is to give justifications for the above claims, without asymptotic approximations, under the assumption that either transmission or recovery times follow a phase-type distribution~\cite{Asmussen1996} introduced below. Consider a time-homogeneous Markov process in continuous-time with $p+1$ ($p\geq 1$) states such that the states~$1$, $\dotsc$, $p$ are transient and the state $p+1$ is absorbing. The infinitesimal generator of the process is then necessarily of the form
\begin{equation}\label{eq:infgen}
\begin{bmatrix}
S & v\\
0 & 0
\end{bmatrix}, \ v = -S\onev, 
\end{equation}
where $S \in \mathbb{R}^{p\times p}$ is an invertible Metzler matrix with non-positive row-sums. Let {$\smallbmatrix{\phi\\0} \in \mathbb{R}^{p+1}$} ($\phi \in \mathbb{R}^p$) denote the initial distribution of the Markov process. Then, the time to absorption into the state~$p+1$, denoted by $(\phi, S)$, is called a \emph{phase-type distribution}. It is known that the set of phase-type distributions is dense in the set of positive valued distributions~\cite{Cox1955}. Moreover, there are efficient fitting algorithms to approximate a given arbitrary distribution by a phase-type distribution  \cite{Asmussen1996}.

In order to analyze the conditions \eqref{eq:Cator:PHrecovery} and
\eqref{eq:Cator:PHinfection}  using phase-type distributions, we shall state the
following two assumptions:

\begin{assumption}\label{assm:Tbeta}
$R$ follows a phase-type distribution~$(\phi, S)$ and the
inter-renewal times of $T$ follow an exponential distribution with
mean $1/\beta$.
\end{assumption}

\begin{assumption}\label{assm:Rdelta}
The inter-renewal times of $T$ follow a phase-type distribution
$(\phi, S)$ and $R$ follows an exponential distribution with mean
$1/\delta$.
\end{assumption}

\newcommand{\dN}{\,dN}
\subsection{Vector Representations of Phase-type Distributions}

Vector representations of phase-type distributions play a crucial role
in our analysis of SIS models. In order to introduce
these representations, we first recall a vector representation of
time-homogeneous Markov processes introduced in \cite{Brockett2008a}:

\begin{lemma}[{\cite[Section~II]{Brockett2008a}}]\label{lem:brockett}
Let $Q$ be the infinitesimal generator of a time-homogeneous Markov
process taking its values in $\{1, \dotsc, p\}$. For distinct $i$ and
$j$ let $N_{Q_{ij}}$ denote the Poisson counter of rate $q_{ij}$.
Consider the stochastic differential equation 
\begin{equation*}
dx = \sum_{\ell, m=1}^p
(E_{m \ell} - E_{\ell \ell})x \dN_{q_{\ell m}}
\end{equation*}
with the initial state
$x(0)$ being a standard unit vector in~$\mathbb{R}^p$ with probability
one. Then $x$ is the time-homogeneous Markov process with the
infinitesimal generator $Q$ and the state space~$\{e_1, \dotsc,
e_p\}\subset \mathbb{R}^p$.
\end{lemma}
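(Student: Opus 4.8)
The plan is to check directly that the jump dynamics prescribed by the stochastic differential equation are exactly those of the canonical jump construction of a continuous-time Markov chain with generator $Q$. The starting point is the elementary identity $(E_{m\ell} - E_{\ell\ell}) e_k = \delta_{\ell k}(e_m - e_\ell)$, which is immediate from $E_{m\ell} e_k = (e_\ell^\top e_k)\, e_m$. Two consequences are worth isolating: first, the coefficient matrix vanishes whenever $m = \ell$, so the (meaningless) term involving $dN_{q_{\ell\ell}}$ drops out and only the counters $N_{q_{\ell m}}$ with $\ell\neq m$, whose rates $q_{\ell m}\geq 0$ are legitimate, matter; second, if $x(t^-) = e_\ell$ then the only counters whose firing moves $x$ at time $t$ are the $N_{q_{\ell m}}$ with $m\neq \ell$, and such a firing sends $x$ from $e_\ell$ to $e_\ell + (e_m - e_\ell) = e_m$. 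In particular $\{e_1,\dots,e_p\}$ is invariant under the dynamics.

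Next I would argue inductively over the successive firing times. Conditioning on $x(0) = e_{\ell_0}$ (the general unit-vector initial condition follows by averaging over $\ell_0$), the previous paragraph shows that while $x = e_{\ell_0}$ the relevant firings are those of the independent Poisson counters $N_{q_{\ell_0 m}}$, $m\neq \ell_0$. Hence the first firing occurs after an exponential holding time of rate $\sum_{m\neq\ell_0} q_{\ell_0 m} = -q_{\ell_0\ell_0}$, and, independently of that time, it is a firing of $N_{q_{\ell_0 m}}$ --- sending $x$ to $e_m$ --- with probability $q_{\ell_0 m}/(-q_{\ell_0\ell_0})$. This is precisely the first step of the standard construction of the Markov chain with generator $Q$. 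The memoryless (strong Markov) property of the Poisson counters lets us restart the argument from the post-jump state, and since the state space is finite with bounded rates the chain does not explode, so the description determines $x$ for all $t\geq 0$. Identifying $e_k$ with the index $k$, this shows that $x$ is the time-homogeneous Markov process with generator $Q$ and state space $\{e_1,\dots,e_p\}$.

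The main obstacle is not the algebra but making the passage from this informal ``jump description'' to a genuine identification rigorous: one must either invoke a uniqueness result for the martingale problem / Kolmogorov equations associated with $Q$, or carefully glue the inter-jump exponential holding times together using independence and non-explosion. A minor point to dispose of along the way is that, almost surely, no two Poisson counters fire at the same instant, so the right-hand side of the stochastic differential equation is unambiguous. Since the statement is quoted from \cite{Brockett2008a}, one may alternatively cite that reference for the formal argument and retain the computation above as the underlying intuition.
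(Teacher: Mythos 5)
Your verification is correct, but note that the paper offers no proof of this lemma at all: it is imported verbatim from Brockett's paper via the bracketed citation, and the surrounding text simply proceeds to use it. So there is no ``paper's approach'' to compare against; what you have written is a self-contained substitute for the citation. Your argument is the standard one and it is sound: the identity $(E_{m\ell}-E_{\ell\ell})e_k = \delta_{\ell k}(e_m - e_\ell)$ shows that the set $\{e_1,\dotsc,e_p\}$ is invariant, that the vacuous $\ell=m$ terms disappear, and that from state $e_\ell$ only the counters $N_{q_{\ell m}}$, $m\neq\ell$, are active, whence the holding time is exponential of rate $-q_{\ell\ell}$ and the jump distribution is $q_{\ell m}/(-q_{\ell\ell})$ --- exactly the jump-chain/holding-time construction of the chain with generator $Q$ (your rates match the row convention used implicitly elsewhere in the paper, where the mean equation comes out as $\dot\xi = Q^\top\xi$). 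You also correctly flag the only two points requiring care: almost-sure non-simultaneity of the independent counters, and the passage from the informal jump description to a rigorous identification, which on a finite state space with bounded rates is immediate from non-explosion plus uniqueness for the associated Kolmogorov/martingale problem. The only thing your write-up buys beyond the paper is self-containedness; the only thing the paper's citation buys is brevity.
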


Using this lemma, we now provide a vector representation of phase-type
distributions, as follows:

\begin{lemma}\label{lem:PHdist}
Let $(\phi, S)$ be a phase-type distribution. Let $e_\phi$ denote the
probability distribution on the standard unit vectors in
$\mathbb{R}^p$ given by $\Pr(e_\phi = e_i) = \phi_i$ for $i=1, \dotsc,
p$. Consider the stochastic differential equation
\begin{equation}\label{eq:PHnonrenewal}
dx 
= 
\sum_{\ell, m=1}^p(E_{m\ell} - E_{\ell\ell}) x\dN_{S_{\ell m}} 
- 
\sum_{\ell=1}^p E_{\ell\ell} x \dN_{v_\ell}
\end{equation}
with the initial condition that $x(0)$ follows $e_\phi$. Then, the extinction
time random variable
\begin{equation}\label{eq:defR}
\begin{aligned}
\hspace{-1cm}R 
&= 
\inf\{t > 0: x(t) = 0\}
\\
&=
\inf\{t > 0:\exists \ell^*,\,x(t)=e_\ell^* \mbox{ and $N_{v_{\ell^*}}$ jumps at time~$t$}\}
\end{aligned}
\end{equation}
follows $(\phi, S)$.
\end{lemma}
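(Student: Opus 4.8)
The plan is to realize the process $x$ defined by \eqref{eq:PHnonrenewal} as a coordinate projection of the vectorial representation, supplied by Lemma~\ref{lem:brockett}, of the $(p+1)$-state absorbing Markov process whose absorption time is, by definition, the phase-type distribution $(\phi,S)$.

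First I would set up the augmented process. Let $Q = \smallbmatrix{S & v\\0 & 0}$, which is a legitimate infinitesimal generator: $S$ is Metzler and $v = -S\onev \geq 0$ because $S$ has non-positive row-sums, so the off-diagonal entries of $Q$ are non-negative and every row of $Q$ sums to zero. Applying Lemma~\ref{lem:brockett} with $p$ replaced by $p+1$, and with the initial state drawn from the distribution that puts mass $\phi_i$ on $e_i^{p+1}$ for $i = 1, \dotsc, p$ (and no mass on $e_{p+1}^{p+1}$), produces a process $\tilde x$ on $\{e_1^{p+1}, \dotsc, e_{p+1}^{p+1}\}$ that is exactly the continuous-time Markov chain of \eqref{eq:infgen} started from $\smallbmatrix{\phi\\0}$; hence its absorption time $\tilde R = \inf\{t > 0 : \tilde x(t) = e_{p+1}^{p+1}\}$ follows $(\phi, S)$.

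Next I would project. Put $\Pi = \smallbmatrix{I_p & 0}$ and $x = \Pi \tilde x$, so that $x$ takes values in $\{e_1^p, \dotsc, e_p^p, 0\}$ and $x(t) = 0$ if and only if $\tilde x(t) = e_{p+1}^{p+1}$. The Brockett equation for $\tilde x$ reads $d\tilde x = \sum_{\ell, m=1}^{p+1}(E_{m\ell}^{p+1} - E_{\ell\ell}^{p+1})\tilde x\dN_{q_{\ell m}}$. The counters $N_{q_{p+1, m}}$ have rate $0$ and never jump, so they may be dropped, and the terms with $m = \ell$ vanish identically. For $\ell, m \leq p$ one has $q_{\ell m} = S_{\ell m}$, and using $\Pi e_k^{p+1} = e_k^p$ for $k \leq p$ together with $(e_\ell^{p+1})^\top\tilde x = (e_\ell^p)^\top x$ one obtains $\Pi(E_{m\ell}^{p+1} - E_{\ell\ell}^{p+1})\tilde x = (E_{m\ell}^p - E_{\ell\ell}^p) x$; for $\ell \leq p$ and $m = p+1$ one has $q_{\ell, p+1} = v_\ell$, and since $\Pi e_{p+1}^{p+1} = 0$ one obtains $\Pi(E_{p+1, \ell}^{p+1} - E_{\ell\ell}^{p+1})\tilde x = -E_{\ell\ell}^p x$. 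Summing over the surviving terms reproduces \eqref{eq:PHnonrenewal} verbatim, with $x(0) = \Pi\tilde x(0)$ following $e_\phi$; since the jump dynamics \eqref{eq:PHnonrenewal} determine $x$ pathwise from the driving Poisson counters and the initial state, $x$ coincides with the process in the statement.

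Finally, since $e_{p+1}^{p+1}$ is absorbing and $x(t) = 0 \Leftrightarrow \tilde x(t) = e_{p+1}^{p+1}$, I conclude $R = \inf\{t > 0 : x(t) = 0\} = \tilde R$, which follows $(\phi, S)$. The second expression for $R$ in \eqref{eq:defR} then follows by inspecting \eqref{eq:PHnonrenewal}: the first sum keeps $x$ inside the set of standard unit vectors (it sends $e_\ell \mapsto e_m$ when $N_{S_{\ell m}}$ fires and leaves $x$ unchanged otherwise), so the only way $x$ can reach $0$ is through a term of the second sum, that is, at an instant when $N_{v_{\ell^*}}$ jumps while $x = e_{\ell^*}$ for some $\ell^*$. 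I expect the only genuine work — and the main place to slip — to be the bookkeeping in the projection step: verifying the two matrix identities above on the relevant value sets of $\tilde x$, and checking that the zero-rate counters arising from the absorbing row of $Q$ really play no role; the rest is conceptual.
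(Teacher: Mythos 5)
Your proposal is correct and follows essentially the same route as the paper: both apply Lemma~\ref{lem:brockett} to the $(p+1)$-state generator \eqref{eq:infgen} and then identify the absorbing state $e_{p+1}$ with the zero vector of $\mathbb{R}^p$ (the paper does this by "identifying $e_{p+1}$ with $0$" and dropping the inert last coordinate; you make the same step explicit via the projection $\Pi = \smallbmatrix{I_p & 0}$). Your version is merely a more careful writing of the same argument, including the observation about the zero-rate counters and the verification of the second identity in \eqref{eq:defR}, both of which the paper treats as immediate.
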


\begin{proof}
The second identity in \eqref{eq:defR} can be checked from the differential equation~\eqref{eq:PHnonrenewal}. Let us show the first identity. By Lemma~\ref{lem:brockett}, the stochastic differential equation
\begin{equation*}
dx 
= 
\sum_{\ell, m=1}^p (E^{p+1}_{m\ell} - E^{p+1}_{\ell\ell})x\dN_{S_{\ell m}} 
+ 
\sum_{\ell=1}^{p} (E^{p+1}_{p+1,\ell} - E^{p+1}_{\ell\ell})x\dN_{v_\ell}
\end{equation*}
represents a time-homogeneous Markov process with state space~$\{e_1,
\dotsc, e_{p+1}\} \subset \mathbb{R}^{p+1}$ and the infinitesimal
generator in~\eqref{eq:infgen}. Therefore, identifying $e_{p+1}$, the
absorbing state, with the zero vector, we can see that the equation
\begin{equation}\label{eq:novnoeinfc}
dx 
= 
\sum_{\ell, m=1}^p (E^{p+1}_{m\ell} - E^{p+1}_{\ell\ell})x\dN_{S_{\ell m}} 
+ 
\sum_{\ell=1}^{p} (- E^{p+1}_{\ell\ell})x\dN_{v_\ell}
\end{equation}
represents a time-homogeneous Markov process with state space~$\{e_1,
\dotsc, e_p, 0\} \subset \mathbb{R}^{p+1}$ and the infinitesimal
generator in~\eqref{eq:infgen}. Since in \eqref{eq:novnoeinfc} the
last element of the variable~$x$ plays no role, the equation
\eqref{eq:novnoeinfc} is equivalent to the stochastic differential
equation \eqref{eq:PHnonrenewal} with the state space~$\mathbb{R}^p$.
Thus, \eqref{eq:PHnonrenewal} gives the time-homogeneous Markov process
with the infinitesimal generator \eqref{eq:infgen} and the state
space~$\{e_1, \dotsc, e_p, 0\} \subset \mathbb{R}^{p}$, where
$0 \in \mathbb{R}^p$ is the absorbing state. Since $x(0)$ follows
$e_\phi$, by the definition of phase-type distributions as the exit
time, the random variable $R$ follows $(\phi, S)$.
\end{proof}

Based on the above proved lemma, we can further give a vector
representation of renewal processes whose inter-renewal times have a
phase-type distribution.

\begin{lemma}\label{lem:PHrenewal}
Let $(\phi, S)$ be a phase-type distribution. Let $\epsilon_\phi$ be
the stochastic process that follows $e_\phi$ at every time~$t\geq 0$
independently. Consider the stochastic differential equation
\begin{equation}\label{eq:PHrenewal}
dx 
= 
\sum_{\ell, m=1}^p(E_{m\ell} - E_{\ell\ell}) x\dN_{S_{\ell m}} 
+ 
\sum_{\ell=1}^p (\epsilon_\phi e_\ell^\top - E_{\ell\ell}) x \dN_{v_\ell}
\end{equation}
with the initial condition that $x(0)$ follows $e_\phi$. Define $T_0 = 0$ and
let $0 < T_1 < T_2 < \cdots$ be the times at which $x_\ell = 1$ and the
counter $N_{v_\ell}$ jumps for some $\ell$. Then, the stochastic process $T =
\{T_k\}_{k=0}^\infty$ is the renewal process whose inter-renewal times follow
$(\phi, S)$.
\end{lemma}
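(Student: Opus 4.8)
The plan is to deduce Lemma~\ref{lem:PHrenewal} from Lemma~\ref{lem:PHdist} by a regeneration (strong Markov) argument: the reset term in~\eqref{eq:PHrenewal} forces the process to restart afresh at each $T_k$, so that the successive excursions of $x$ are i.i.d.\ copies of the absorption dynamics~\eqref{eq:PHnonrenewal}.

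First I would analyze the effect of the reset term. Whenever the counter $N_{v_\ell}$ jumps at a time $t$, the contributed increment is $(\epsilon_\phi e_\ell^\top - E_{\ell\ell})x(t^-)$; this vanishes if $x(t^-) = e_j$ with $j \neq \ell$, and it equals $\epsilon_\phi(t) - e_\ell$ if $x(t^-) = e_\ell$, i.e.\ $x$ jumps from $e_\ell$ to the fresh sample $\epsilon_\phi(t) \sim e_\phi$. (Since $\epsilon_\phi$ is only ever evaluated at the countably many jump times of the counters, no measurability subtlety arises.) Hence $T_1$ is exactly the first time such a reset occurs, and on $[0, T_1)$ the process $x$ solves precisely~\eqref{eq:PHnonrenewal} up to its extinction time, no reset having occurred yet. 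By Lemma~\ref{lem:PHdist}, $T_1$ then follows $(\phi, S)$; it is a.s.\ finite and positive since $S$ is an invertible Metzler matrix, so that $(\phi, S)$ is a proper distribution on $(0, \infty)$.

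Next I would run the regeneration step. At time $T_1$ the state is $x(T_1) = \epsilon_\phi(T_1) \sim e_\phi$, and $\epsilon_\phi(T_1)$ is independent of the history $\{x(s) : 0 \le s \le T_1\}$ and of $T_1$ itself, because $\epsilon_\phi$ is an i.i.d.\ family indexed by time. Together with the memorylessness of the Poisson counters, this shows $\{x(T_1 + t)\}_{t \ge 0}$ is an independent copy of the process started from $e_\phi$, whence $T_2 - T_1 \sim (\phi, S)$ independently of $T_1$. Iterating --- formally, by induction on $k$ using the strong Markov property of the piecewise-deterministic jump process defined by~\eqref{eq:PHrenewal} at the stopping time $T_k$ --- the increments $T_{k+1} - T_k$ are i.i.d.\ with common law $(\phi, S)$, which is precisely the definition of the renewal process with inter-renewal distribution $(\phi, S)$. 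Since each increment is a.s.\ positive and finite, $T_k \to \infty$, so all the $T_k$ are well defined.

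I expect the main obstacle to be making the regeneration step fully rigorous: one must verify that $T_1$ is a stopping time for the filtration generated by the Poisson counters and the reset variables, and that the post-$T_1$ data (restarted counters, the reset variable at $T_1$, and all later reset variables) are independent of the pre-$T_1$ data --- that is, the strong Markov property for this jump process. Once that is established, identifying the inter-renewal law with $(\phi, S)$ follows immediately from Lemma~\ref{lem:PHdist} applied to each excursion, and the algebra of the jump terms together with the finiteness of $T_1$ is routine.
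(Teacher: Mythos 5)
Your proposal is correct and follows essentially the same route as the paper's proof: identify the first excursion of~\eqref{eq:PHrenewal} with the absorption dynamics~\eqref{eq:PHnonrenewal} so that $T_1\sim(\phi,S)$ by Lemma~\ref{lem:PHdist}, observe that the reset term sends $x(T_1)$ to a fresh sample of $e_\phi$, and conclude by memorylessness of the Poisson counters and induction. Your explicit computation of the jump increments and your remarks on the strong Markov property are simply a more careful rendering of the same argument.
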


\begin{proof}
Let us first show $T_1$ follows $(\phi, S)$. By the definition of $T_1$, on
the interval $[0, T_1)$, the stochastic differential
equation~\eqref{eq:PHrenewal} is equivalent to
\eqref{eq:PHnonrenewal}. In this equivalent equation, the random
variables $T_1$ and $R$ are equal by their definitions and, furthermore,
$R$ follows $(\phi, S)$ by Lemma~\ref{lem:PHdist}. Therefore $T_1$
follows $(\phi, S)$.

At time~$T_1$, the differential equation~\eqref{eq:PHrenewal} reads $dx =
\epsilon_\phi - x$. This means that $x(T_1)$ follows $e_\phi$. Therefore, by
using the memoryless property of Poisson counters, we can show that $T_2-T_1$
follows $(\phi, S)$. In this way, an inductive argument proves that $T$ is a
renewal process with its inter-renewal times following $(\phi, S)$.
\end{proof}

\section{Exponential Transmission Times}\label{sec:inf=exp}

In this section we analyze SIS models under
Assumption~\ref{assm:Tbeta} and give sufficient conditions to achieve the exponential stability of the
infection-free equilibrium. We notice that, under this
assumption, D2 is equivalent to the following condition:
\begin{enumerate}
\item[D2$'$)] Whenever $i$ and $j$ are adjacent, $i$ is susceptible, and
$j$ is infected, the node $i$ becomes infected with an instantaneous
rate of $\beta$.
\end{enumerate}

The next proposition gives a vector representation of the SIS model
under consideration. Throughout the paper, for each $1\leq i\leq
n$ and a real number $\lambda > 0$, we let $N^{(i)}_\lambda$ denote the
Poisson counter with rate $\lambda$. We assume that all the Poisson
counters are independent of each other.

\begin{proposition}\label{prop:PHrecovery}
Let $x^{(i)}$ ($i=1, \dotsc, n$) be the solutions of the stochastic
differential equation:
\begin{equation}\label{eq:SDE:PHrecovery}
\begin{multlined}
dx^{(i)} 
= 
\sum_{\ell, m=1}^p(E_{m\ell} - E_{\ell\ell}) x^{(i)}\dN_{S_{\ell m}}^{(i)} 
\\- 
\sum_{\ell=1}^p E_{\ell\ell} x^{(i)} \dN_{v_\ell}^{(i)}
+ 
\epsilon_\phi(1-\onevtop x^{(i)})\sum_{j=1}^n a_{ij} \onevtop x^{(j)} \dN_\beta^{(j)}
\end{multlined}
\end{equation}
with initial conditions:
\begin{equation}\label{eq:x(0)initcond}
\begin{cases}
\begin{aligned}
&x^{(i)}(0) \text{ follows } e_\phi,\quad i\in \Lambda, 
\\
&x^{(i)}(0) = 0,\quad\text{otherwise.}
\end{aligned}
\end{cases}
\end{equation}
Define the stochastic processes $z_i$ ($i=1, \dotsc, n$) by
\begin{equation}\label{eq:defz}
z_i(t) = 
\begin{cases}
\text{infected},\quad \onevtop x^{(i)}(t) = 1, \\
\text{susceptible},\quad \onevtop x^{(i)}(t) = 0. 
\end{cases}
\end{equation}
Then, the processes $z_i$ are a SIS model satisfying
Assumption~\ref{assm:Tbeta}.
\end{proposition}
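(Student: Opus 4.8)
The plan is to verify the three defining conditions D1--D3 of the SIS model directly from the stochastic differential equation \eqref{eq:SDE:PHrecovery}, after first checking that the construction is well posed. For each $i$ I would show that $x^{(i)}$ takes values in $\{e_1, \dots, e_p\}\cup\{0\}\subset\mathbb{R}^p$, so that $\onevtop x^{(i)}(t)\in\{0,1\}$ and $z_i$ in \eqref{eq:defz} is well defined, with $z_i(t)$ infected precisely when $x^{(i)}(t)\neq 0$. This follows by induction over the (almost surely distinct) jump times of the countably many independent Poisson counters: the initial conditions \eqref{eq:x(0)initcond} lie in this set; the first two sums in \eqref{eq:SDE:PHrecovery} are the jump terms of \eqref{eq:PHnonrenewal}, which by the proof of Lemma~\ref{lem:PHdist} keep $x^{(i)}$ inside $\{e_1,\dots,e_p,0\}$; and the last term, when it acts on a susceptible node, replaces $x^{(i)}$ by an independent draw $\epsilon_\phi\in\{e_1,\dots,e_p\}$.

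For D1, fix a time $t$ at which $z_i$ becomes infected---either $t=0$ with $i\in\Lambda$, or $t>0$ when the last term of \eqref{eq:SDE:PHrecovery} fires while $x^{(i)}(t^-)=0$. In either case $x^{(i)}(t)$ is an independent draw from $e_\phi$, by \eqref{eq:x(0)initcond} or by the i.i.d.\ property of the family $\epsilon_\phi$, respectively. On the ensuing interval during which $x^{(i)}$ stays nonzero, the factor $1-\onevtop x^{(i)}$ annihilates the last term of \eqref{eq:SDE:PHrecovery}, so $x^{(i)}$ evolves exactly according to \eqref{eq:PHnonrenewal}; hence, by Lemma~\ref{lem:PHdist} together with the memoryless property of the counters $N^{(i)}_{S_{\ell m}}$ and $N^{(i)}_{v_\ell}$, the time $R^{(i)}(t)$ until $x^{(i)}$ first returns to $0$ is a $(\phi,S)$ random variable. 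Since the families $\{N^{(i)}_{S_{\ell m}},N^{(i)}_{v_\ell}\}$ are independent across $i$ and of all other counters, these recovery times are mutually independent and follow $(\phi,S)$, as required by Assumption~\ref{assm:Tbeta}.

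For D2, I would track the counter $N^{(i)}_\beta$. For each neighbor $j$ of $i$, the $k=i$ summand of the infection term in $dx^{(j)}$ is $a_{ij}\,\epsilon_\phi\,(1-\onevtop x^{(j)})\,\onevtop x^{(i)}\,\dN^{(i)}_\beta$; since $a_{ij}=1$, whenever $N^{(i)}_\beta$ jumps at an instant when $i$ is infected, each susceptible neighbor $j$ is reset to an independent draw from $e_\phi$ and hence becomes infected, while already-infected neighbors are unchanged because of the factor $1-\onevtop x^{(j)}$; in both cases $z_j$ is infected at that instant. Setting $T^{(i)}_0=0$ and letting $0<T^{(i)}_1<T^{(i)}_2<\cdots$ be the successive jump times of $N^{(i)}_\beta$ after the instant $i$ most recently became infected, the memoryless property of that counter makes this a renewal process with exponential inter-renewal times of mean $1/\beta$, independent of everything else; the observation above is exactly D2 with this choice of transmission times. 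Finally, D3 is immediate from \eqref{eq:x(0)initcond}: $z_i(0)$ is infected exactly when $i\in\Lambda$, and for such $i$ the recovery time $R^{(i)}(0)$ and transmission times $T^{(i)}_k$ are the ones just constructed with $t=0$; the stipulated independence across nodes, and the fact that the laws depend neither on $t$ nor on $i$, follow from the independence and homogeneity of the driving counters.

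The step I expect to be the main obstacle is making rigorous the \emph{decoupling} implicit above: that on an infected excursion of node $i$ the equation genuinely collapses to \eqref{eq:PHnonrenewal} (so each recovery is a fresh phase-type clock), while on a susceptible interval the infection events of $i$ arrive as a superposition of independent rate-$\beta$ Poisson streams, one per infected neighbor. This rests on three ingredients to be spelled out with care: the role of the indicator factors $1-\onevtop x^{(i)}$ and $\onevtop x^{(j)}$; the almost-sure absence of simultaneous jumps among the independent Poisson counters; and the combination of the i.i.d.\ property of $\epsilon_\phi$ with the memoryless property of the Poisson counters, which together guarantee that every re-infection restarts an independent phase-type recovery time and an independent exponential transmission stream, consistently with the time- and node-independence assumed after the Definition.
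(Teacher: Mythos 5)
Your argument is correct and follows essentially the same route as the paper: verify D1--D3 directly from the SDE \eqref{eq:SDE:PHrecovery}, using the indicator factors to reduce an infected excursion to \eqref{eq:PHnonrenewal} (so Lemma~\ref{lem:PHdist} gives the $(\phi,S)$ recovery time) and the jumps of the rate-$\beta$ counters to produce the transmissions. The only differences are cosmetic: you check D2 directly by exhibiting the renewal process as the jump times of $N^{(i)}_\beta$, whereas the paper checks the equivalent per-receiver condition D2$'$, and you add an explicit (and welcome) verification that the state stays in $\{e_1,\dotsc,e_p,0\}$, which the paper leaves implicit.
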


\begin{proof}
Let $t_0\geq 0$ and $i$ be arbitrary. First assume that $i$ is
susceptible at time~$t_0$, i.e., $\onevtop x^{(i)}(t_0) = 0$.
Then, from equation~\eqref{eq:SDE:PHrecovery}, we see that
$x^{(i)}(t) = 0$ at least until any of the counters
${N_{\beta}^{(j)}}$ jumps for some $j$ such that $a_{ij} = 1$. Until
that time instant, the differential equation~\eqref{eq:SDE:PHrecovery}
reads
\begin{equation}\label{eq:PHrecovery:hoge}
dx^{(i)} 
= 
\epsilon_\phi \sum_{j=1}^n a_{ij} \onevtop x^{(j)} \dN_\beta^{(j)}. 
\end{equation}
To this inequality we multiply $\onevtop$ from the left and obtain $d(\onevtop
x^{(i)}) = \sum_{j=1}^n a_{ij} \onevtop x^{(j)} \dN_\beta^{(j)}$ because $
\epsilon_\phi$ follows $e_\phi$ and $e_\phi$ is one of the standard unit vectors
with probability one. Therefore, if $a_{ij} = 1$ and $\onevtop x^{(j)} = 1$,
then the quantity $\onevtop x^{(i)}$ becomes one whenever ${N_\beta^{(j)}}$
jumps. In other words, whenever $i$ is susceptible, $j$ is adjacent to $i$, and
$j$ is infected, the node~$i$ becomes infected with the constant rate of
$\beta$. Therefore, D2$'$ is satisfied.

Assume that $i$ becomes infected at time~$t_0$. If $t_0 = 0$,
then $x^{(i)}(t_0)$ follows $e_\phi$ by the initial
condition~\eqref{eq:x(0)initcond}. On the other hand, if $t_0 > 0$,
then the infection must occur by the transmission from a neighboring
node and, by the argument in the last paragraph, such a transmission
occurs when one of its neighbors $j$ is infected and $N_\beta^{(j)}$
jumps. Since multiple Poisson counters cannot jump at the same time
with probability one, the equation \eqref{eq:PHrecovery:hoge} implies
that $x^{(i)}(t_0)$ follows $e_\phi$. Therefore, $x^{(i)}(t_0)$ follows
$e_\phi$ whatever value $t_0$ takes and hence we can, without loss of
generality, assume that $t_0 = 0$. Until $i$ becomes susceptible, that is,
until $x^{(i)}$ becomes $0$, the stochastic differential
equation~\eqref{eq:SDE:PHrecovery} is indeed equivalent to the vector
representation~\eqref{eq:PHnonrenewal} of the distribution~$(\phi,
S)$, which has $0\in \mathbb{R}^p$ as its exit state. This argument
shows that the length of time until $i$ becomes susceptible follows
$(\phi, S)$. Therefore D1 is also satisfied.

Finally, the initial conditions~\eqref{eq:x(0)initcond} ensure that 
D3 is also satisfied. This completes the proof. 
\end{proof}

Using the stochastic differential equations~\eqref{eq:SDE:PHrecovery} we can
derive the following sufficient condition for the exponential stability of the
infection-free equilibrium.

\begin{theorem}\label{thm:PHrecovery}
Under Assumption~\ref{assm:Tbeta}, if the matrix $$\mathcal A_\beta = I \otimes
S^\top + \beta A \otimes (\phi\onevtop)$$ is Hurwitz stable, then the
infection-free equilibrium of the SIS model is exponentially stable.
\end{theorem}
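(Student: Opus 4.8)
The plan is to pass from the stochastic differential equation~\eqref{eq:SDE:PHrecovery} to a deterministic linear differential inequality for the expected states, and then to close the argument with a Lyapunov-type estimate built from Lemma~\ref{lem:Metzler:Hurwitz}. First I would write $\bar x^{(i)}(t) = E[x^{(i)}(t)]$ and take expectations in~\eqref{eq:SDE:PHrecovery}, using the standard rule that a jump term $f(x)\,\dN_\lambda$ contributes $\lambda E[f(x)]$ to $\tfrac{d}{dt}E[x]$. The first two sums contribute $\bigl(\sum_{\ell,m} S_{\ell m}(E_{m\ell}-E_{\ell\ell})\bigr)\bar x^{(i)} - \diag(v)\bar x^{(i)}$, and a short computation --- using $\sum_{\ell,m} S_{\ell m} E_{m\ell} = S^\top$, $\sum_{\ell,m} S_{\ell m} E_{\ell\ell} = \diag(S\onev)$, and $v = -S\onev$ --- collapses this to exactly $S^\top\bar x^{(i)}$. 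For the infection term, independence of $\epsilon_\phi$ from the other variables yields $E[\epsilon_\phi(1-\onevtop x^{(i)})\onevtop x^{(j)}] = \phi\,E[(1-\onevtop x^{(i)})\onevtop x^{(j)}]$, and since $\onevtop x^{(i)}\in\{0,1\}$ and $\phi\ge 0$ this is bounded entrywise by $\phi\onevtop\bar x^{(j)}$.

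Stacking $\bar x = (\bar x^{(1)\top},\dots,\bar x^{(n)\top})^\top$, the previous step yields the entrywise differential inequality $\dot{\bar x}\le(I\otimes S^\top+\beta A\otimes(\phi\onevtop))\bar x = \mathcal A_\beta\bar x$, with $\bar x(0)\ge 0$ by~\eqref{eq:x(0)initcond} and, in fact, $\bar x(t)\ge 0$ for all $t$ since it is the expectation of a nonnegative vector. The matrix $\mathcal A_\beta$ is Metzler: $I\otimes S^\top$ is Metzler because $S^\top$ is, and $\beta A\otimes(\phi\onevtop)$ is nonnegative because $A,\phi,\onev\ge0$ and $\beta>0$.

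Next I would exploit Hurwitz stability. Since $\mathcal A_\beta^\top$ is Metzler and Hurwitz, for small enough $c>0$ the matrix $\mathcal A_\beta^\top+cI$ is still Metzler and Hurwitz, so by Lemma~\ref{lem:Metzler:Hurwitz} there is a positive vector $u$ with $\mathcal A_\beta^\top u\le -cu$ entrywise. Then $V(t) := u^\top\bar x(t)\ge 0$ satisfies $\dot V\le u^\top\mathcal A_\beta\bar x = (\mathcal A_\beta^\top u)^\top\bar x\le -c\,u^\top\bar x = -cV$, using $\bar x\ge 0$; hence $V(t)\le V(0)e^{-ct}$. Since $u>0$ and the probabilities satisfy $p_i(t) = \Pr(\onevtop x^{(i)}(t)=1) = \onevtop\bar x^{(i)}(t)$, we get $0\le p_i(t)\le\onevtop\bar x(t)\le V(t)/\min_k u_k$, which tends to zero exponentially --- exactly exponential stability of the infection-free equilibrium.

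The routine-but-delicate part is the expectation computation, especially the Metzler/Kronecker bookkeeping that makes the recovery terms collapse to $S^\top$, and the correct handling of the nonlinear infection term through the $\{0,1\}$-valued bound; once the linear comparison inequality is in hand, the decay estimate is immediate from Lemma~\ref{lem:Metzler:Hurwitz}. No irreducibility is needed here, so Lemmas~\ref{lem:metzler}--\ref{lem:eta<...} do not enter this proof.
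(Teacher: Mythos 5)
Your proposal is correct and follows essentially the same route as the paper: take expectations in \eqref{eq:SDE:PHrecovery}, discard the nonpositive quadratic coupling term, and arrive at the entrywise differential inequality $d\xi/dt \leq \mathcal A_\beta\,\xi$ for the stacked mean vector. The only difference is in the last step, where the paper simply invokes a comparison principle for the Metzler matrix $\mathcal A_\beta$, while you make that step self-contained by extracting a positive vector $u$ with $\mathcal A_\beta^\top u \le -cu$ from Lemma~\ref{lem:Metzler:Hurwitz} and running the linear Lyapunov function $V = u^\top \bar x$; both are valid, and your version has the minor merit of not needing the external comparison-principle reference.
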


\begin{proof}
Define $\xi^{(i)}(t) = E[x^{(i)}(t)]$. Since $p_i(t) = \onevtop \xi^{(i)}(t)$,
it is sufficient to show that $\xi^{(i)}(t)$ converges to zero exponentially
fast for every $i$. Taking expectations in the differential
equation~\eqref{eq:SDE:PHrecovery} (for details, see,
e.g., \cite{Brockett2009}) yields
\begin{equation*} 
\begin{multlined}[.95\linewidth]
\frac{d\xi^{(i)}}{dt}
= 
\sum_{{\ell, m=1}}^p(E_{m\ell} - E_{\ell\ell}) S_{\ell m} \xi^{(i)}
- 
\sum_{\ell=1}^p E_{\ell\ell} v_\ell\xi^{(i)}
\\
+ 
\phi \onevtop \sum_{j=1}^n a_{ij} \beta \xi^{(j)}
- \phi \onevtop \sum_{j=1}^n a_{ij} \beta E[\onevtop x^{(i)} \onevtop x^{(j)}].  
\end{multlined}
\end{equation*}
Ignoring the last negative term in this equation, we obtain the inequality
${d\xi^{(i)}}/{dt} \leq S^\top \xi^{(i)} + \beta (A_i \otimes (\phi \onevtop))
\xi$, where the $\mathbb{R}^{np}$-valued function $\xi$ is obtained by stacking
$\xi^{(1)}$, $\dotsc$, $\xi^{(n)}$. Therefore, we see that $d\xi/dt \leq
\mathcal A_\beta \xi$. Hence, if $\mathcal A_\beta$ is Hurwitz stable, then the
comparison principle~\cite{Kirkilionis2004} shows that $\xi(t)$ converges to
zero exponentially fast as $t\to \infty$, as desired.
\end{proof}

Using Theorem~\ref{thm:PHrecovery}, we can prove the validity of the condition
\eqref{eq:Cator:PHrecovery} under irreducibility conditions.

\begin{theorem} 
In addition to Assumption~\ref{assm:Tbeta}, assume that the matrices $A$ and $S$
are irreducible and $\phi$ is positive. Then, the condition~\eqref{eq:Cator:PHrecovery} is sufficient for the exponential stability of the
infection-free equilibrium.
\end{theorem}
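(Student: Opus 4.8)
The plan is to reduce everything to Theorem~\ref{thm:PHrecovery}: it suffices to prove that, under the stated hypotheses, $\eta(A)<1/(\beta E[R])$ forces the matrix $\mathcal A_\beta = I\otimes S^\top + \beta A\otimes(\phi\onevtop)$ to be Hurwitz stable. Note first that $\mathcal A_\beta$ is Metzler, since $I\otimes S^\top$ is Metzler (because $S$, hence $S^\top$, is) and $\beta A\otimes(\phi\onevtop)$ is nonnegative (because $\beta>0$, $A\geq 0$, $\phi>0$). Hence, by Lemma~\ref{lem:Metzler:Hurwitz}, I only need to exhibit a positive vector $w$ with $\mathcal A_\beta w<0$.

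I would look for $w$ in the Kronecker product form $w=u\otimes q$. For the factor $u$: since $A$ is nonnegative and irreducible, the Perron-Frobenius theorem supplies a positive vector $u$ with $Au=\rho(A)u=\eta(A)u$ (recall $\rho(A)=\eta(A)$ for a nonnegative matrix). For the factor $q$: note that $S$ is Hurwitz stable (its states being transient, $e^{St}\to 0$), so by Lemma~\ref{lem:Metzler:Hurwitz} the matrix $-(S^\top)^{-1}$ is nonnegative; it is also nonsingular, hence has no zero row, so $q:=-(S^\top)^{-1}\phi$ is a \emph{positive} vector because $\phi>0$. This $q$ obeys the two identities $S^\top q=-\phi$ and $\onevtop q=-\phi^\top S^{-1}\onev=E[R]$, the latter being just the standard mean of the phase-type distribution $(\phi,S)$ (and, consistently, the integral over $t$ of the survival function $\Pr(R>t)=\onevtop e^{S^\top t}\phi$ that can be read off from Lemma~\ref{lem:PHdist}).

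Putting these together, the mixed-product rule for the Kronecker product gives
\begin{equation*}
  \mathcal A_\beta(u\otimes q) = u\otimes(S^\top q) + \beta(Au)\otimes\bigl(\phi\,(\onevtop q)\bigr) = \bigl(\beta\,\eta(A)\,E[R]-1\bigr)\,(u\otimes\phi).
\end{equation*}
Since $\beta>0$ and $E[R]>0$, the hypothesis $\eta(A)<1/(\beta E[R])$ is exactly $\beta\,\eta(A)\,E[R]-1<0$; as $u\otimes\phi>0$, this yields $\mathcal A_\beta(u\otimes q)<0$ together with $u\otimes q>0$. Lemma~\ref{lem:Metzler:Hurwitz} then gives that $\mathcal A_\beta$ is Hurwitz stable, and Theorem~\ref{thm:PHrecovery} finishes the proof.

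I do not anticipate a genuine obstacle here: once the right $w$ is guessed, the rest is a direct computation built around the Perron eigenvector of $A$. The only points that need a little care are the two structural facts about $S$ used to get $q>0$ — namely that $S$ is Hurwitz stable and that $-(S^\top)^{-1}$ is a nonsingular nonnegative matrix, where the irreducibility of $S$ and the positivity of $\phi$ are convenient — and keeping the sign convention in $E[R]=-\phi^\top S^{-1}\onev$ aligned with the vector representation of phase-type distributions used throughout the paper.
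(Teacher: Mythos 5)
Your proof is correct, and the central object is the same as in the paper's proof --- the test vector $u\otimes\bigl(-(S^{-1})^\top\phi\bigr)$ built from the Perron eigenvector of $A$ and the mean-related vector of the phase-type distribution, together with the identity $\onevtop(S^{-1})^\top\phi=-E[R]$ --- but the logical route is genuinely different and somewhat leaner. The paper argues in two steps: it first shows that $\eta(\mathcal A_\beta)$ is strictly increasing in $\beta$ via Lemma~\ref{lem:eta<...}, and then identifies the critical value $\beta_0=1/(E[R]\eta(A))$ as the point where $\eta(\mathcal A_{\beta_0})=0$ by exhibiting that same vector as a \emph{positive null vector} and invoking the marginal-case characterization of Lemma~\ref{lem:metzler}. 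Both of those lemmas require $\mathcal A_\beta$ to be irreducible, which is where the paper's hypothesis that $S$ be irreducible enters. You instead verify Hurwitz stability of $\mathcal A_\beta$ directly at the given $\beta$ by producing a positive $w$ with $\mathcal A_\beta w<0$ and appealing only to the standard equivalence in Lemma~\ref{lem:Metzler:Hurwitz}; the strict inequality $\beta\,\eta(A)\,E[R]<1$ does the work that monotonicity in $\beta$ does for the paper. What your approach buys is (i) no need for the monotonicity and marginal-eigenvalue lemmas, hence no need to check irreducibility of the large Kronecker-structured matrix $\mathcal A_\beta$, and (ii) the observation that irreducibility of $S$ is not actually used --- only $A$ irreducible (for $u>0$), $\phi>0$, and $S$ an invertible Hurwitz Metzler matrix are needed. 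What the paper's route buys in exchange is the sharper structural statement that $\beta_0$ is exactly the threshold where $\eta(\mathcal A_\beta)$ crosses zero, i.e., that the sufficient condition is tight within this framework, which your one-sided certificate does not by itself establish.
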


\begin{proof}
First, we shall see that $\eta(\mathcal A_\beta)$ is strictly increasing
with respect to $\beta$. In fact, for an arbitrary $\epsilon > 0$, we
have $\mathcal A_{\beta+\epsilon} = \mathcal A_{\beta} + \epsilon A\otimes
(\phi \onev^\top)$. In this decomposition, $\mathcal A_{\beta}$ is Metzler
and irreducible by the assumption. Moreover $\epsilon A\otimes (\phi
\onev^\top)$ is nonzero and nonnegative. Thus,
Lemma~\ref{lem:eta<...} shows $\eta(\mathcal A_\beta) <
\eta(\mathcal A_{\beta+\epsilon})$.

Therefore, to prove the given claim, it is sufficient to show that $\eta(\mathcal A_{\beta_0}) = 0$ where $\beta_0 = 1/(E[R]\eta(A))$. This is equivalent to the existence of a positive null vector for $\mathcal A_{\beta_0}$ by Lemma~\ref{lem:metzler} because $\mathcal A$ is Metzler and irreducible. In the rest of the proof, we shall show that $z = u \otimes {(-(S^{-1})^\top \phi)}$ is such a null vector, where $u$ is the eigenvector of $A$ corresponding to the eigenvalue $\eta(A)$. Let us first show that $z$ is positive. The vector $u$ can be taken to be positive by the Perron-Frobenius theory because $A$ is irreducible (see \cite{Vandergraft1968}). Moreover, since $S$ is Metzler and Hurwitz stable, Lemma~\ref{lem:Metzler:Hurwitz} shows that $S^{-1}$ does not have a positive entry. Also $S^{-1}$ clearly does not have a zero row. Therefore, since $\phi$ is positive, the vector~$-(S^{-1})^\top \phi$ is also positive. Hence $z$ is indeed positive. Now, let us compute the product $\mathcal A_{\beta_0} z$:
\begin{equation}\label{eq:computecalAbeta0}
\begin{aligned}
\mathcal A_{\beta_0} z
&=
-u \otimes \phi - (\beta_0 \eta(A) u) \otimes  (\phi \onevtop (S^{-1})^\top\phi)
\\
&=
-(u \otimes \phi) (
1 + \beta_0 \eta(A)\onevtop (S^{-1})^\top\phi). 
\end{aligned}
\end{equation}
Since the mean of the distribution $(\phi, S)$ equals $-\phi^\top
S^{-1}\onev$ (\cite{Asmussen1996}), we have $\onevtop
(S^{-1})^\top\phi =  \phi^\top S^{-1} \onev = -E[R]$. Therefore, by
the definition of $\beta_0$, the equation \eqref{eq:computecalAbeta0}
shows $\mathcal A_{\beta_0} z = 0$, as desired. This completes the
proof of the theorem.
\end{proof}

\begin{remark}
An advantage of Theorem~\ref{thm:PHrecovery} over the condition
\eqref{eq:Cator:PHrecovery} is that the theorem explicitly gives an
upper bound on the speed of convergence to the infection-free state as
the maximum real part of the eigenvalues of~$\mathcal A_\beta$. This
would enable us to, for example, design the optimal strategies for
distributing preventive resources over networks under constraints on
the speed and the total amount of the resources available, as in
\cite{Preciado2014}.
\end{remark}

\section{Exponential Recovery Times}\label{sec:cure=exp}

As the dual of the previous section, in this section we analyze SIS
models under Assumption~\ref{assm:Rdelta}. Under this assumption, D1
is equivalent to the following condition:
\begin{enumerate}
\item[D1$'$)] Whenever a node becomes infected, it will recover with the
instantaneous rate of $\delta$.
\end{enumerate}

The next proposition gives a vector representation of the
corresponding SIS model as in Proposition~\ref{prop:PHrecovery}.

\begin{proposition}\label{prop:PHinfection}
Let $x^{(i)}$ ($i=1, \dotsc, n$) be the solutions of the stochastic
differential equation
\begin{equation}\label{eq:SDE:PHinfection}
\begin{multlined}
dx^{(i)}  
=  -x^{(i)}  \dN_\delta^{(i)} 
+\,\sum_{\mathclap{\ell, m=1}}^p\,
(E_{m\ell} - E_{\ell\ell})x^{(i)}  \dN_{S_{\ell m}}^{(i)} +
\\
\sum_{\mathclap{\ell=1}}^p (\epsilon_\phi e_\ell^\top  - E_{\ell\ell}) x^{(i)}  \dN^{(i)}_{v_\ell} +
  \epsilon_\phi (1-\onevtop x^{(i)} )\sum_{\mathclap{j=1}}^n a_{ij} 
\sum_{\mathclap{\ell=1}}^p x_\ell^{(j)} \dN_{v_\ell}^{(j)}
\end{multlined}
\end{equation}
with the initial conditions~\eqref{eq:x(0)initcond}. Then the stochastic processes $z_1$, $\dotsc$, $z_n$ defined by \eqref{eq:defz} are a SIS model satisfying Assumption~\ref{assm:Rdelta}.
\end{proposition}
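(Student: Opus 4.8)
The plan is to mirror the proof of Proposition~\ref{prop:PHrecovery}, interchanging the roles of the recovery and the transmission mechanisms. Throughout, one reads $x^{(i)}$ as taking values in $\{0, e_1, \dotsc, e_p\}$, with $x^{(i)} = 0$ meaning that node~$i$ is susceptible and $x^{(i)} = e_\ell$ meaning that node~$i$ is infected with its transmission renewal process currently in phase~$\ell$. The four terms on the right-hand side of~\eqref{eq:SDE:PHinfection} are then interpreted as follows: the first term $-x^{(i)}\dN_\delta^{(i)}$ effects the (exponential) recovery; the second and third terms together reproduce the renewal dynamics~\eqref{eq:PHrenewal} of the transmission clock of node~$i$; and the last term infects a currently susceptible node~$i$ upon a transmission event at one of its infected neighbors. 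With $z_i$ defined via~\eqref{eq:defz}, I would verify the three defining conditions D1, D2, D3 of the SIS model together with the distributional requirements in Assumption~\ref{assm:Rdelta}.

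First I would analyze a node~$i$ that is susceptible at a time~$t_0$, so $\onevtop x^{(i)}(t_0) = 0$ and hence $x^{(i)}(t_0) = 0$. Because the first three terms of~\eqref{eq:SDE:PHinfection} each carry a factor $x^{(i)}$, they drop out, and on the interval before any new jump the equation reduces to $dx^{(i)} = \epsilon_\phi\sum_{j=1}^n a_{ij}\sum_{\ell=1}^p x_\ell^{(j)}\dN_{v_\ell}^{(j)}$; thus $x^{(i)}$ stays $0$ until, for some neighbor~$j$ of~$i$ and some~$\ell$, one has $x_\ell^{(j)} = 1$ and $N_{v_\ell}^{(j)}$ jumps. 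Left-multiplying by $\onevtop$ and using that $\epsilon_\phi$ is almost surely a standard unit vector gives $d(\onevtop x^{(i)}) = \sum_{j=1}^n a_{ij}\sum_{\ell=1}^p x_\ell^{(j)}\dN_{v_\ell}^{(j)}$, so $\onevtop x^{(i)}$ jumps from $0$ to $1$ exactly at those instants. By the construction in Lemma~\ref{lem:PHrenewal}, the instants at which $x_\ell^{(j)} = 1$ and $N_{v_\ell}^{(j)}$ jumps are precisely the transmission times $T_k^{(j)}$ of node~$j$ occurring while~$j$ is infected. Hence a susceptible~$i$ becomes infected exactly when an infected neighbor~$j$ has a transmission-clock renewal, which is the content of D2; and if~$j$ fires while~$i$ is already infected, the factor $1 - \onevtop x^{(i)}$ annihilates the term, consistently with both nodes being infected. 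The assertion that the inter-renewal times of $T^{(j)}$ follow $(\phi, S)$ is taken up in the next step.

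Next I would analyze a node~$i$ that becomes infected at a time~$t_0$. For $t_0 = 0$ this is fixed by the initial condition~\eqref{eq:x(0)initcond}; for $t_0 > 0$, the infection is a jump of some $N_{v_\ell}^{(j)}$ as above, and, since two Poisson counters jump simultaneously with probability zero, the reduced equation of the previous step forces $x^{(i)}(t_0) = \epsilon_\phi$. In either case $x^{(i)}(t_0)$ follows $e_\phi$, so without loss of generality $t_0 = 0$. Let $R^{(i)}$ be the first jump time of $N_\delta^{(i)}$ after~$t_0$; on the interval $[t_0, t_0 + R^{(i)})$ the first term of~\eqref{eq:SDE:PHinfection} is inactive and the last term is killed by the factor $1 - \onevtop x^{(i)} = 0$, so the equation for $x^{(i)}$ coincides verbatim with~\eqref{eq:PHrenewal}. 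By Lemma~\ref{lem:PHrenewal}, the times at which $x_\ell^{(i)} = 1$ and $N_{v_\ell}^{(i)}$ jumps form a renewal process with inter-renewal times following $(\phi, S)$; combined with the previous step this gives D2 with the law demanded by Assumption~\ref{assm:Rdelta}. At time $t_0 + R^{(i)}$ the first term resets $x^{(i)}$ to $0$, so~$i$ turns susceptible after a time equal to $R^{(i)}$, which, $N_\delta^{(i)}$ being an independent rate-$\delta$ Poisson counter, is exponentially distributed with mean $1/\delta$; this establishes D1 (equivalently D1$'$) together with the distributional requirement of Assumption~\ref{assm:Rdelta}. Finally, D3 is immediate from~\eqref{eq:x(0)initcond}.

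The algebraic reductions of~\eqref{eq:SDE:PHinfection} in the two cases are routine; the step needing the most care is recognizing that, on the infected interval $[t_0, t_0 + R^{(i)})$, the dynamics of $x^{(i)}$ is \emph{exactly} the renewal equation~\eqref{eq:PHrenewal} — this relies on the simultaneous facts that the recovery counter has not yet fired and that the infection term decouples because $1 - \onevtop x^{(i)}$ vanishes, and only then can Lemma~\ref{lem:PHrenewal} be invoked to supply the phase-type inter-renewal law. One should also note that the memoryless property of the Poisson counters makes each successive infected period of a node begin a fresh, independent copy of the renewal process, and that $N_\delta^{(i)}$ is independent of this mechanism, which is what guarantees the mutual independence of the recovery and transmission times postulated in the paper.
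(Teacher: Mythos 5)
Your proposal is correct and follows essentially the same route as the paper's proof: analyze the susceptible case by left-multiplying the reduced equation by $\onevtop$, then reduce the infected case to the renewal representation~\eqref{eq:PHrenewal} and invoke Lemma~\ref{lem:PHrenewal}, with the first term supplying the rate-$\delta$ recovery (D1$'$) and the initial conditions giving D3. The only differences are cosmetic — you spell out the ``without loss of generality $t_0=0$'' step and the vanishing of the infection term via $1-\onevtop x^{(i)}=0$, which the paper handles by reference to the proof of Proposition~\ref{prop:PHrecovery}.
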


\begin{proof}
Let $t_0\geq 0$ and $i$ be arbitrary. First assume that $i$ is susceptible at time~$t_0$. After time~$t_0$, and while $i$ is susceptible, the differential equation~\eqref{eq:SDE:PHinfection} reads $dx^{(i)}   = \epsilon_\phi \sum_{j=1}^n a_{ij} \sum_{\ell=1}^p x_\ell^{(j)} \dN^{(j)}_{v_\ell}$. Multiplying $\onevtop$ from the left yields $d(\onevtop x^{(i)} )  = \sum_{j=1}^n a_{ij} \sum_{\ell=1}^p x_\ell^{(j)} \dN^{(j)}_{v_\ell}$. Therefore, if $a_{ij} = 1$ and $x_\ell^{(j)} = 1$, then, whenever the counter $N_{v_\ell}^{(j)}$ jumps, the quantity $\onevtop x^{(i)} $ becomes one, that is, $i$ becomes infected.

We then consider the case that $i$ becomes infected at time~$t_0$. As in the proof of Theorem~\ref{thm:PHrecovery}, without loss of generality we can assume $t_0 = 0$. After time~$0$, and while $i$ is infected, the differential equation~\eqref{eq:SDE:PHinfection} reads
\begin{equation*} 
\begin{multlined}[.9\linewidth]
dx^{(i)}  
= 
-x^{(i)}  \dN_\delta^{(i)}
+ 
\,\sum_{{\ell, m=1}}^p \,
(E_{m\ell} - E_{\ell\ell})x^{(i)}  \dN^{(i)}_{S_{\ell m}}
+ 
\\
\sum_{\ell=1}^p (\epsilon_\phi e_\ell^\top - E_{\ell\ell}) x^{(i)}  \dN^{(i)}_{v_\ell}.
\end{multlined}
\end{equation*}
By the first term of this equation, we see that $x^{(i)}$ becomes zero when and
only when the counter~$N_\delta^{(i)}$ jumps. This implies that $i$ recovers
with a rate of~$\delta$ and hence shows D1$'$ to be true. On the other hand,
until $N_\delta^{(i)}$ jumps, the variable~$x^{(i)}$ follows the same
differential equation as~\eqref{eq:PHrenewal}. Since $x^{(i)} (0)$ follows
$e_\phi$, by Lemma~\ref{lem:PHrenewal}, the times at which $x^{(i)} _\ell = 1$
and $N_{v_\ell}^{(i)}$ jumps for some $\ell$ form the renewal process with its
inter-renewal times following $(\phi, S)$. This observation and the argument in
the first paragraph of this proof prove that the stochastic processes $z_i$
satisfy D2. Also D3 holds true by the initial
conditions~\eqref{eq:x(0)initcond}.
\end{proof}

From Proposition~\ref{prop:PHinfection} we obtain the following
criterion for the exponential stability:

\begin{theorem}\label{thm:PHinfection}
Under Assumption~\ref{assm:Rdelta}, if
\begin{equation}\label{eq:delta>...}
\delta > \eta\bigl(I\otimes S^\top + (A+I)\otimes (\phi v^\top) \bigr), 
\end{equation}
then the infection-free equilibrium of the SIS model is exponentially stable. 
\end{theorem}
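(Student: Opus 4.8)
The plan is to mirror the structure of the proof of Theorem~\ref{thm:PHrecovery}. First I would take expectations in the stochastic differential equation~\eqref{eq:SDE:PHinfection}. Writing $\xi^{(i)}(t) = E[x^{(i)}(t)]$, the Poisson-counter differentiation rules give a linear-plus-quadratic ODE for $\xi^{(i)}$: the $-x^{(i)}\dN_\delta^{(i)}$ term contributes $-\delta\xi^{(i)}$, the two middle sums contribute the action of $S^\top$ together with the renewal reset terms $\sum_\ell(\phi e_\ell^\top - E_{\ell\ell})v_\ell$, and the last (coupling) term contributes $\phi\sum_j a_{ij}\sum_\ell v_\ell E[(1-\onevtop x^{(i)})x_\ell^{(j)}]$. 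As in the earlier proof, I would drop the nonnegative correction terms arising from the factor $(1-\onevtop x^{(i)})$, bounding that expectation above by $E[x_\ell^{(j)}] = \xi^{(j)}_\ell$, so that $d\xi^{(i)}/dt \le S^\top\xi^{(i)} + \phi v^\top\xi^{(i)} + \phi v^\top\sum_j a_{ij}\xi^{(j)}$, where the self-term $\phi v^\top\xi^{(i)}$ comes from the renewal reset and the $A$-term from the neighbor coupling. The factor $(A+I)$ in~\eqref{eq:delta>...} is exactly this $A$ (neighbor transmissions) plus $I$ (the self renewal reset).

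The main point to get right is the bookkeeping on the middle sums: I want to verify that $\sum_{\ell,m}(E_{m\ell}-E_{\ell\ell})S_{\ell m} + \sum_\ell(\phi e_\ell^\top - E_{\ell\ell})v_\ell$ acts on $\xi^{(i)}$ as $S^\top\xi^{(i)} + \phi(v^\top\xi^{(i)})$. The first double sum is the same term that appeared in Theorem~\ref{thm:PHrecovery} and equals $S^\top$ acting on $\xi^{(i)}$ (using $S_{\ell\ell}$ together with $v_\ell = -\sum_m S_{\ell m}$ reconstitutes the diagonal); the $-E_{\ell\ell}v_\ell$ pieces combine with it, and the residual $\phi e_\ell^\top v_\ell$ summed over $\ell$ gives $\phi v^\top$. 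Stacking the $\xi^{(i)}$ into $\xi \in \mathbb{R}^{np}$, the differential inequality reads $d\xi/dt \le \bigl(-\delta I \otimes I + I\otimes S^\top + (A+I)\otimes(\phi v^\top)\bigr)\xi$. Since $-\delta I\otimes I$ just shifts all eigenvalues by $-\delta$, the bracketed matrix is Hurwitz stable precisely when $\delta > \eta\bigl(I\otimes S^\top + (A+I)\otimes(\phi v^\top)\bigr)$, which is hypothesis~\eqref{eq:delta>...}.

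Having established the differential inequality with a Hurwitz-stable Metzler comparison matrix, I would invoke the comparison principle (as cited in the proof of Theorem~\ref{thm:PHrecovery}) to conclude $\xi(t) \to 0$ exponentially, and hence $p_i(t) = \onevtop\xi^{(i)}(t) \to 0$ exponentially, which is the definition of exponential stability of the infection-free equilibrium.

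I expect the main obstacle to be purely the combinatorial verification that the reset/coupling terms in~\eqref{eq:SDE:PHinfection} assemble into the clean expression $I\otimes S^\top + (A+I)\otimes(\phi v^\top)$ after taking expectations — in particular, confirming that the self-loop renewal term genuinely produces the "$+I$" inside $(A+I)$ rather than being absorbed into $S^\top$, and checking signs. Once that identification is made, the rest is a direct application of the comparison principle, exactly parallel to the proof of Theorem~\ref{thm:PHrecovery}, with the only novelty being the $-\delta I$ shift.
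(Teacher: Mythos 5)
Your proposal is correct and follows essentially the same route as the paper's proof: take expectations in \eqref{eq:SDE:PHinfection}, drop the nonnegative correlation term coming from the factor $(1-\onevtop x^{(i)})$, verify that the reset and coupling terms assemble into $(-\delta I + S^\top + \phi v^\top)\xi^{(i)} + (A_i\otimes(\phi v^\top))\xi$ so that $d\xi/dt \le \bigl(I\otimes S^\top + (A+I)\otimes(\phi v^\top) - \delta I\bigr)\xi$, and conclude by the comparison principle. Your bookkeeping check that $\sum_{\ell,m}(E_{m\ell}-E_{\ell\ell})S_{\ell m} + \sum_\ell(\phi e_\ell^\top - E_{\ell\ell})v_\ell = S^\top + \phi v^\top$ (via $v_\ell = -\sum_m S_{\ell m}$) is exactly the identification the paper uses implicitly.
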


\begin{proof}
As in the proof of Theorem~\ref{thm:PHinfection}, it is sufficient to show that
$\xi^{(i)}(t)$ converges to zero exponentially fast as $t\to\infty$ for every
$i$. Taking the expectation in \eqref{eq:SDE:PHinfection}, we can see that
\begin{equation*}
\begin{aligned}
\!\!\frac{d\xi^{(i)}\!\!}{dt}
&= 
-\delta \xi^{(i)}\! + 
\sum_{\mathclap{\ell, m=1}}^p(E_{m\ell} - E_{\ell\ell}) S_{\ell m} \xi^{(i)}
+
\sum_{\ell=1}^p (\phi e_\ell^\top - E_{\ell\ell}) \xi^{(i)} v_\ell
\\
&\hspace{.5cm}+\phi \sum_{j=1}^n a_{ij}\sum_{\ell=1}^p e_\ell^\top \xi^{(j)} v_\ell - \phi \onevtop \sum_{j=1}^n \sum_{\ell=1}^p a_{ij} E[x^{(i)} x^{(j)}_\ell] v_\ell
\\
&\leq
(-\delta I + S^\top + \phi v^\top)\xi^{(i)} + (A_i\otimes (\phi v^\top)) \xi.
\end{aligned}
\end{equation*}
Therefore ${d\xi}/{dt} \leq ( I\otimes S^\top + (A+I)\otimes (\phi v^\top) -
\delta I ) \xi$. Hence, for $p_i(t)$ to converge to zero exponentially fast, it
is sufficient that the matrix~$I\otimes S^\top + (A+I)\otimes (\phi v^\top) -
\delta I$ is Hurwitz stable by the same argument as in the proof of
Theorem~\ref{thm:PHrecovery}. This proves the sufficiency of the
condition~\eqref{eq:delta>...}.
\end{proof}

Using Theorem~\ref{thm:PHinfection}, we can then validate the effectiveness of
the condition \eqref{eq:Cator:PHinfection} under irreducibility conditions.

\begin{theorem}
Under Assumption~\ref{assm:Rdelta}, if $A$ is irreducible and $v$ is positive,
then \eqref{eq:Cator:PHinfection} implies the exponential stability of the
infection-free equilibrium.
\end{theorem}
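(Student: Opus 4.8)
The plan is to derive the statement from Theorem~\ref{thm:PHinfection} by showing that condition~\eqref{eq:Cator:PHinfection} implies the spectral condition~\eqref{eq:delta>...}. Write $\mathcal B = I\otimes S^\top + (A+I)\otimes(\phi v^\top)$, so that \eqref{eq:delta>...} reads $\eta(\mathcal B) < \delta$, i.e., $\mathcal B - \delta I$ is Hurwitz stable. Since $S$ is Metzler, so is $\mathcal B - \delta I$, and hence so is its transpose $\mathcal B^\top - \delta I = I\otimes(S - \delta I) + (A+I)\otimes(v\phi^\top)$ (here one uses that the adjacency matrix $A$ is symmetric). As $\eta(\mathcal B^\top - \delta I) = \eta(\mathcal B) - \delta$, it therefore suffices, by Lemma~\ref{lem:Metzler:Hurwitz}, to exhibit a positive vector $\zeta$ with $(\mathcal B^\top - \delta I)\zeta < 0$.

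Two standard ingredients feed the construction of $\zeta$. First, for the phase-type distribution $(\phi,S)$ the inter-renewal density equals $\phi^\top e^{St}v$, so its Laplace transform is $f(\delta) = \phi^\top(\delta I - S)^{-1}v$ (see \cite{Asmussen1996}); in particular $f(\delta) > 0$, and hence \eqref{eq:Cator:PHinfection} is equivalent to the inequality $(1+\eta(A))\,f(\delta) < 1$. Second, since $A$ is irreducible and nonnegative, the Perron-Frobenius theorem \cite{Vandergraft1968} gives a positive vector $u$ with $Au = \eta(A)u$, so that $(A+I)u = (1+\eta(A))u$.

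I would then set $w = (\delta I - S)^{-1}v$ and $\zeta = u\otimes w$, and first check $\zeta > 0$. Because $S$ is Metzler and Hurwitz stable and $\delta > 0$, the matrix $S - \delta I$ is Metzler and Hurwitz stable, so Lemma~\ref{lem:Metzler:Hurwitz} yields $(S - \delta I)^{-1} \le 0$, that is, $(\delta I - S)^{-1} \ge 0$. Moreover $(\delta I - S)\onev = \delta\onev + v > 0$, so the identity $\onev = (\delta I - S)^{-1}(\delta\onev + v)$ shows that $(\delta I - S)^{-1}$ has no zero row; combined with $v > 0$ this gives $w = (\delta I - S)^{-1}v > 0$, and hence $\zeta = u\otimes w > 0$. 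A short computation using the mixed-product rule for $\otimes$, the identities $(A+I)u = (1+\eta(A))u$ and $(S - \delta I)w = -v$, and $\phi^\top w = \phi^\top(\delta I - S)^{-1}v = f(\delta)$, then gives $(\mathcal B^\top - \delta I)\zeta = u\otimes\bigl[(S-\delta I)w + (1+\eta(A))(\phi^\top w)v\bigr] = \bigl[(1+\eta(A))f(\delta) - 1\bigr]\,(u\otimes v)$. Under \eqref{eq:Cator:PHinfection} the scalar coefficient $(1+\eta(A))f(\delta)-1$ is negative, while $u\otimes v > 0$ because $u > 0$ and $v > 0$; hence $(\mathcal B^\top - \delta I)\zeta < 0$. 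By Lemma~\ref{lem:Metzler:Hurwitz}, $\mathcal B^\top - \delta I$, and therefore $\mathcal B - \delta I$, is Hurwitz stable, which is exactly \eqref{eq:delta>...}, and Theorem~\ref{thm:PHinfection} then gives the exponential stability.

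The computation is routine; the only structural point that matters is to work with $\mathcal B^\top$ rather than $\mathcal B$, so that the test vector is built from $(\delta I - S)^{-1}v$ and its strict positivity can be read off from the hypothesis $v > 0$. The symmetric construction applied to $\mathcal B$ itself would instead produce a vector proportional to $\phi$, which need not be positive here --- this is presumably why $v$, rather than $\phi$, is assumed positive. I would also note that, unlike the earlier $\beta$-case, irreducibility of $S$ is not required, since we apply Lemma~\ref{lem:Metzler:Hurwitz} directly instead of the marginal-eigenvalue characterization of Lemma~\ref{lem:metzler}, and I would record explicitly the elementary equivalence between \eqref{eq:Cator:PHinfection} and $(1+\eta(A))f(\delta) < 1$, valid because $f(\delta) > 0$.
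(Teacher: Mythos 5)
Your proof is correct and follows essentially the same route as the paper: the same test vector $u\otimes\bigl((\delta I - S)^{-1}v\bigr)$, the same use of Lemma~\ref{lem:Metzler:Hurwitz}, and the same identification of $f(\delta)$ with $\phi^\top(\delta I - S)^{-1}v$. Your one refinement---applying the test vector to $\mathcal B^\top$ rather than $\mathcal B$ and using $\eta(\mathcal B)=\eta(\mathcal B^\top)$---makes explicit a step the paper glosses over, since the displayed vector satisfies $\mathcal B^\top z<0$ rather than $\mathcal B z<0$ as literally asserted there.
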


\begin{proof}
Assume \eqref{eq:Cator:PHinfection}. Then we have $f(\delta) < 1/(1+\eta(A))$.
Since the probability density function of $(\phi, S)$ has the form
$\phi^\top \exp(St)v$ ($t\geq 0$), we obtain
\begin{equation}\label{eq:usedsome}
\phi^\top (\delta I - S)^{-1} v < 1/(1+\eta(A)). 
\end{equation}
By Theorem~\ref{thm:PHinfection}, it is sufficient to show that the Metzler
matrix  $\mathcal B = I\otimes S^\top + (A+I)\otimes (\phi v^\top) - \delta I$
is Hurwitz stable. Define $z = u \otimes ((\delta I - S)^{-1}v)$, where $u > 0$
is the Perron-Frobenius eigenvector of $A$. Since $S-\delta I$ is Hurwitz
stable, the inverse $(\delta I -S)^{-1} = -(S - \delta I)^{-1}$ is nonnegative
and does not have a zero row by Lemma~\ref{lem:Metzler:Hurwitz}. Therefore the
product $(\delta I - S)^{-1}v$ is positive because $v$ is positive. Hence we see
that $z$ is positive. Now, using \eqref{eq:usedsome}, we can actually show that
$\mathcal B z < 0$. 
Therefore, by Lemma~\ref{lem:Metzler:Hurwitz}, $\mathcal B$ is Hurwitz stable.
\end{proof}

\section{Numerical Simulations}\label{sec:num}

In this section, we illustrate Theorem~\ref{thm:PHinfection} through the
comparison with the condition obtained in \cite{Cator2013a}. Let $R$ follow a
Weibull distributions with probability density function
$({\alpha}/{b})({t}/{b})^{\alpha-1}\exp(-(t/b)^\alpha)$, $t\geq 0$, where
$\alpha$ and $b$ are positive parameters. In order to normalize the mean of the
distributions to be one, we fix $b = \Gamma(1+\alpha^{-1})$ where
$\Gamma(\cdot)$ denotes the Gamma function. In \cite{Cator2013a}, based on the
condition \ref{eq:Cator:PHinfection}, it is concluded that the infimum of the
recovery rate $\delta$ such that the SIS model under Assumption
\ref{assm:Rdelta} has the infection-free steady state equals $\delta_0 =
\Gamma(1+\alpha^{-1}) \Gamma(\alpha+1)^{1/\alpha} \eta(A)^{1/\alpha}$.

\newcommand{\mynum}{.88}

We compare the quantity $\delta_0$ with the infimum recovery rate $\delta_1 =
\eta(I\otimes S^\top + (A+I)\otimes (\phi v^\top))$, which we can obtain from
Theorem~\ref{thm:PHinfection}. Let $\mathcal G$ be a realization of the
Erd\H{o}s-R\'enyi graph with $500$ nodes. We vary the parameter $\alpha$ of the
Weibull distribution as $\alpha = 0.5$, $1$, $\dotsc$, $4.5$, and $5$. The
Weibull distributions are fitted with phase-type distributions using the
expectation\nobreakdash-maximization algorithm proposed in~\cite{Asmussen1996}
(and available
at~\mbox{\small\url{http://home.math.au.dk/asmus/pspapers.html}}). Some of the
fitting results are shown in Fig.~\ref{fig:weibullFitting}. In
Fig.~\ref{fig:weibullInfection}, we compare the two recovery rates $\delta_0$
and $\delta_1$.
\begin{figure}[tb]
\vspace{.1cm}
\centering \includegraphics[width=\mynum\linewidth]{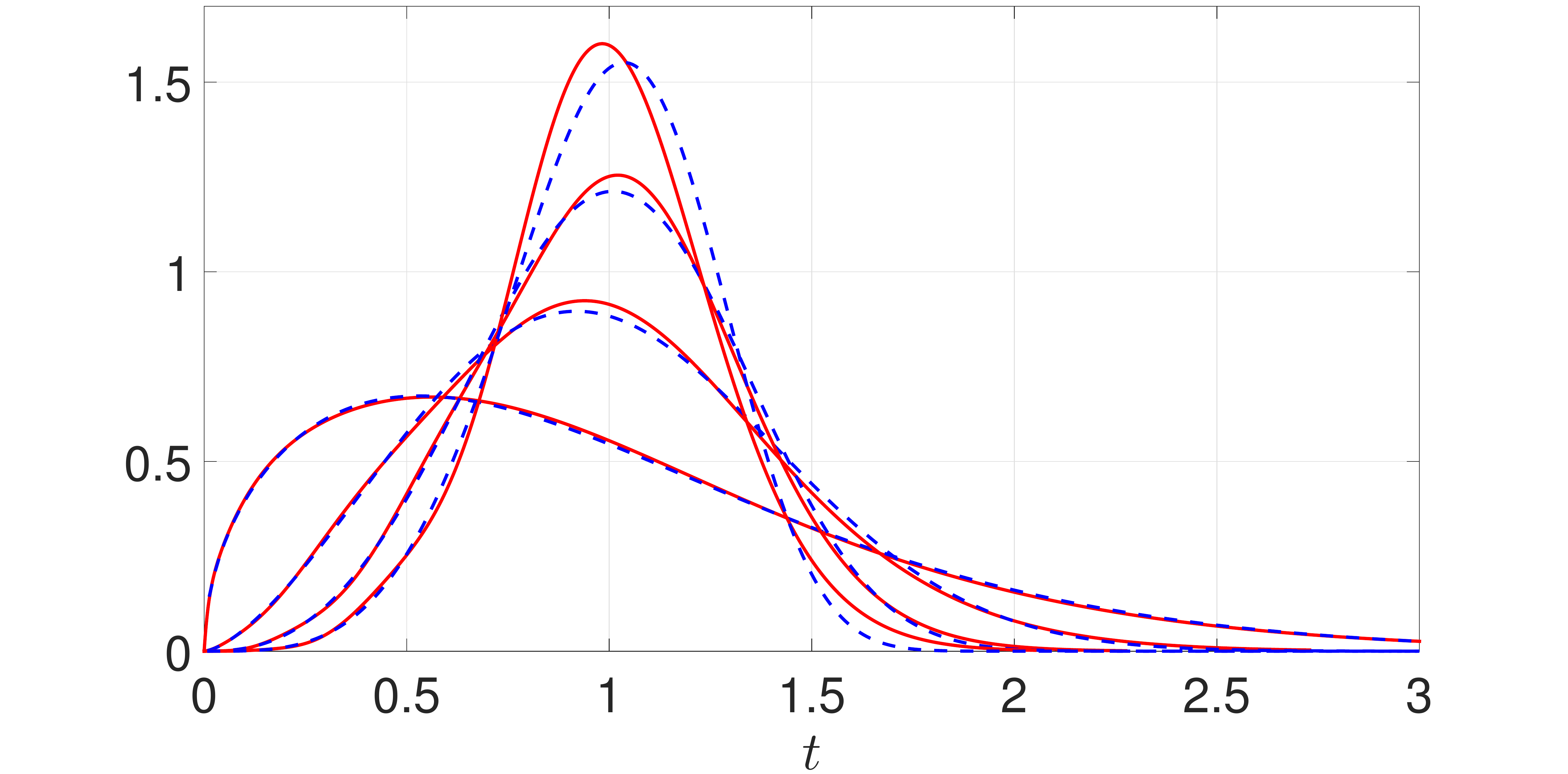}
\caption{Weibull distributions fitted by phase-type distributions for
$\alpha = 1.5$, $2.5$, $3.5$, and $4.5$. Solid:  phase-type distributions. Dashed:  Weibull distributions.} \label{fig:weibullFitting}
\vspace{.25cm}
\includegraphics[width=\mynum\linewidth]{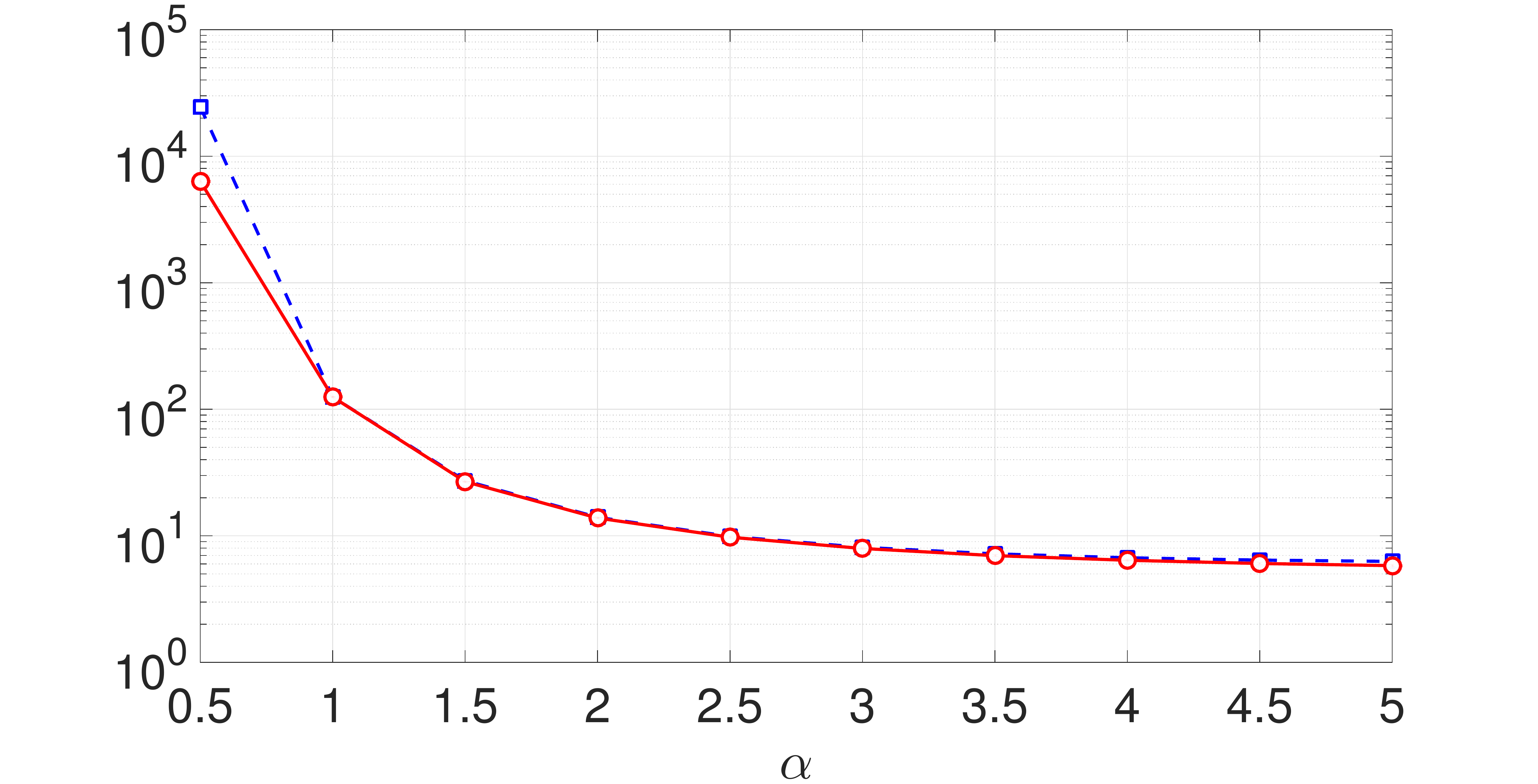}
\caption{Minimum recovery rates. Circle: $\delta_1$. Square: $\delta_0$.}
\label{fig:weibullInfection}
\end{figure}
The graph shows the coincidence of the two rates except at $\alpha =
0.5$.

\section{Conclusion}

In this paper we have analyzed SIS models of spreading over networks with phase-type transmission and recovery times. We have derived sufficient conditions to tame the spread in terms of the eigenvalues of matrices that depend on both the graph structure and the parameters of the phase-type distribution. Our results mathematically justify the conditions found in \cite{Cator2013a} without using asymptotic arguments. The generality of the approach herein introduced is supported by the fact that the set of phase-type distributions is dense in the set of all the positive random variables. As a future work, we will develop control strategies to contain epidemic spreading with phase-type rates, as well as applications in the context of distribution of online content in social networks.


\end{document}